\theoremstyle{plain}
\newtheorem{theorem}{Theorem}
\newtheorem{corollary}{Corollary}
\theoremstyle{definition}
\newtheorem{definition}{Definition}
\newtheorem{remark}{Remark}
\begin{document}

\abovedisplayskip=3pt
\belowdisplayskip=3pt

\title{Fundamental Limits of Decentralized Caching in Fog-RANs with Wireless Fronthaul}
\author{\IEEEauthorblockN{Fan Xu and Meixia Tao}\\
\IEEEauthorblockA{Department of Electronic Engineering, Shanghai Jiao Tong University, Shanghai, China \\Emails: \{xxiaof, mxtao\}@sjtu.edu.cn}\\
\thanks{This work is supported by the National Natural Science Foundation of China under grants 61571299 and 61521062.}}
\maketitle

\begin{abstract}
This paper aims to characterize the synergy of distributed caching and wireless fronthaul in a fog radio access network (Fog-RAN) where all edge nodes (ENs) and user equipments (UEs) have a local cache and store contents independently at random. The network operates in two phases, a file-splitting based decentralized cache placement phase and a fronthaul-aided content delivery phase. We adopt normalized delivery time (NDT) to characterize the asymptotic latency performance with respect to cache size and fronthaul capacity. Both an achievable upper bound and a theoretical lower bound of NDT are obtained, and their multiplicative gap is within 12. In the proposed delivery scheme, we utilize the fronthaul link, by exploiting coded multicasting, to fetch both non-cached and cached contents to boost EN cooperation in the access link. In particular, to fetch contents already cached at ENs, an additional layer of coded multicasting is added on the coded messages desired by UEs in the fronthaul link. Our analysis shows that the proposed delivery scheme can balance the delivery latency between the fronthaul link and access link, and is approximately optimum under decentralized caching.
\end{abstract}

\section{Introduction}
Caching is emerging as an effective technique to reduce peak-hour data traffic and improve user perceived experience in wireless networks. Unlike traditional web-caching and in-network caching, caching at the edge of wireless networks is able to exploit the broadcast nature of wireless medium and thus achieve global caching gain \cite{fundamentallimits}. Recently, it has attracted many interests to characterize the fundamental limits of caching in various wireless networks. This work aims to advance this topic by studying the synergy between distributed caching and wireless fronthauling in fog radio access networks (Fog-RANs).

Previously, the gain of caching is studied in wireless interference networks where caches are equipped at all transmitters and receivers  \cite{mine,bothcache,niesen,gunduz}. It is found in \cite{mine} that with a generic file splitting and caching strategy, the interference network topology can be changed into a new family of channels, referred to as \emph{cooperative X-multicast channels}, and hence leverage transmitter cooperation gain and coded multicasting gain, apart from receiver local caching gain. These works \cite{mine,bothcache,niesen,gunduz}, however, have assumed that there exists a central controller that coordinates the file splitting and cache placement among all distributed nodes (at least all the transmit nodes if not receive nodes as in \cite{gunduz}). Moreover, they also assume that the total cache size among the network nodes is large enough to collectively store the entire database without cache miss.

To exploit the potential of practical and scalable caching in large and random networks, decentralized coded caching is proposed at the user side where users can independently cache file bits at random \cite{decentralized}. It is shown in \cite{decentralized} that decentralized coded caching can achieve a performance close to the optimal centralized scheme.

The idea of decentralized coded caching can be extended to a general interference network where all the transmitters and receivers cache file bits independently at random. However, due to the lack of a central controller for careful cache placement, it is very likely to have cache miss even when the total cache size is as large as in the centralized scheme.  To overcome the cache miss issue, the works \cite{simeone},\cite{Koh2017Cloud} consider a Fog-RAN where each cache-enabled edge node (EN) is connected  via a fronthaul link to a cloud server which has access to the entire database. They characterize a latency-oriented performance with respect to both the EN cache size and the fronthaul capacity. The works \cite{Cran,Joan2018storage,girgis2017decentralized} consider a Fog-RAN where all ENs and user equipments (UEs) are equipped with local caches. The authors in \cite{Cran} propose a caching-and-delivery scheme that combines network-coded fronthaul transmission and cache-aided interference management. The authors in \cite{Joan2018storage} propose a mixed cache placement, i.e., centralized caching at ENs and decentralized caching at UEs, and employ a combination of interference management techniques in the delivery phase. The authors in \cite{girgis2017decentralized} consider a decentralized cache placement at all ENs and UEs and propose a coded delivery strategy that exploits the network topology for Fog-RANs. Note that \cite{girgis2017decentralized} is only limited to two ENs only.

The contribution of this work is to characterize the latency performance of a Fog-RAN with wireless fronthaul and for arbitrary number of ENs and UEs, where all ENs and UEs are equipped with caches. Considering the random mobility of UEs and the dynamic on/off of ENs, we apply decentralized cache placement at all ENs and UEs  without central coordination. As in \cite{simeone,mine,gunduz,Cran,girgis2017decentralized,Joan2018storage,Koh2017Cloud}, we adopt normalized delivery time (NDT) as the performance metric. The network operates in two phases, a decentralized cache placement phase and a fronthaul-aided content delivery phase. In our proposed delivery scheme, the wireless fronthaul is not only responsible to fetch cache-miss contents but also can be used to fetch contents already cached at ENs to boost transmission cooperation to any desired level in the access link. To fetch contents already cached at ENs, an additional layer of coded multicasting on top of the coded messages desired by UEs is exploited in the fronthaul link. To fetch contents not cached at ENs, the coded messages, rather than the original files desired by UEs, are transmitted in the fronthaul link. The access transmission in our proposed delivery scheme is similar to \cite{mine}, which transforms the access link into the cooperative X-multicast channel. Based on the proposed delivery scheme, we obtain an achievable upper bound of the minimum NDT of the network with decentralized caching. Numerical results show that our NDT performance is even better than that using centralized caching \cite{Cran} with wireless fronthaul and that using centralized caching \cite{simeone}, mixed caching \cite{Joan2018storage}, and decentralized caching \cite{girgis2017decentralized} with dedicated fronthaul under certain conditions. Under decentralized caching, we also obtain a theoretical lower bound of the minimum NDT by applying cut-set-like bounds in the fronthaul transmission and access transmission separately. It is shown that the multiplicative gap between the upper and lower bounds is within 12.

Notations: $[K]$ denotes the set $\{1,2,\ldots,K\}$. $\mathcal{CN}(0,1)$ denotes the complex-valued Gaussian distribution with zero mean and unit variance.

\section{System Model and Problem Description}
\subsection{Fog-RAN with Wireless Fronthaul}
We consider a Fog-RAN as shown in Fig~\ref{Fig model}, where there are $N_T$ ($N_T\!\ge\!2$) ENs, $N_R$ ($N_R\!\ge\!2$) UEs, and the ENs are connected to a macro base station (MBS), or a cloud server, through a shared wireless fronthaul link.  All ENs and UEs have a local cache each.  The access link between each EN and each UE experiences channel fading, and is corrupted with additive white Gaussian noise. The communication at each time slot $t$ over the access channel is modeled by

\begin{align}
Y_q(t)=\sum_{p=1}^{N_T}h_{qp}(t)X_p(t)+Z_q(t),q\in[N_R],\notag
\end{align}
where $Y_q(t)\!\in\! \mathbb{C}$ is the received signal at UE $q$, $X_p(t)\!\in\! \mathbb{C}$ is the transmitted signal at EN $p$, $h_{qp}(t)\!\in \!\mathbb{C}$ is the channel coefficient from EN $p$ to UE $q$ which is assumed to be independent and identically distributed (i.i.d.) as some continuous distribution, and $Z_q(t)$ is the noise at UE $q$ distributed as $\mathcal{CN}(0,1)$.

The fronthaul link between the MBS and all the ENs also experiences channel fading and additive white Gaussian noise. The communication at each time slot $t$ over the fronthaul channel is modeled by

\begin{align}
Q_p(t)=g_{p}(t)S(t)+N_p(t),p\in[N_T],\notag
\end{align}
where $Q_p(t)\!\in\! \mathbb{C}$ is the received signal at EN $p$, $S(t)\!\in\! \mathbb{C}$ is the transmitted signal from the MBS, $g_{p}(t)\!\in\! \mathbb{C}$ is the channel coefficient from the MBS to EN $p$ which is assumed to be i.i.d. as some continuous distribution, and $N_p(t)$ is the noise at EN $p$ distributed as $\mathcal{CN}(0,1)$.

Consider a database consisting of $N$ files, denoted as $\{W_1,W_2,\ldots,W_N\}$, each with $F$ bits. Throughout this study, we consider $N\ge N_R$ so that each UE can request a distinct file. The MBS has full access to the database via a dedicated backhaul link. Each EN can store $\mu_TNF$  ($\mu_T\le1$) bits locally, and each UE can store $\mu_RNF$ ($\mu_R\le1$) bits locally, where $\mu_T$ and $\mu_R$ are referred to as \textit{normalized cache sizes} at each EN and UE, respectively. In this work, we consider the complete region for the normalized cache sizes, i.e., $0 \le \mu_R, \mu_T \le 1$, because of the presence of the fronthaul.  Note that the works \cite{mine,gunduz} only consider the feasible region  $N_T\mu_T+\mu_R\ge1$, while \cite{bothcache,niesen} only consider $N_T\mu_T\ge1$.

\begin{figure}[!tbp]
\begin{centering}
\includegraphics[scale=0.19]{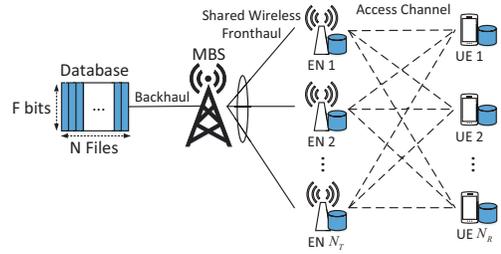}
\caption{Cache-aided Fog-RAN.}\label{Fig model}
\end{centering}
\end{figure}

The network operates in two phases, a \textit{decentralized cache placement phase} and \textit{a two-hop content delivery phase}, as detailed in the next two subsections.

\subsection{Decentralized Cache Placement}
We adopt decentralized cache placement at all ENs and UEs. Each EN $p$ (or UE $q$) independently caches a subset of $\mu_TF$ (or $\mu_RF$) bits of each file $W_n$, chosen uniformly at random, denoted as $U_p^n$ (or $V_q^n$), for $n\in [N]$. Denote $U_p\!\triangleq\!\bigcup_{n\in [N]}\! U_{p}^n$ (or $V_q\!\triangleq\!\bigcup_{n\in [N]}\! V_{q}^n$) as all the cached  bits at EN $p$ (or UE $q$). Note that neither inter-file nor intra-file coding is allowed in the considered decentralized cache placement phase.


By the law of large numbers, when file size $F$ is large enough, the size of each subfile cached exactly at an arbitrary set of $m$ UEs ($0\!\le\! m\!\le\! N_R$) and an arbitrary set of $n$ ENs ($0\!\le\! n\! \le\! N_T$) is $\mu_R^m(1-\mu_R)^{N_R-m}\mu_T^n(1-\mu_T)^{N_T-n}F+o(F)$ bits with high probability. Since this paper focuses on the extreme case when $F\rightarrow \infty$, we ignore the $o(F)$ term in the rest of the paper, and define
\begin{align}
f_{m,n}\!\triangleq\!\mu_R^m(1-\mu_R)^{N_R-m}\mu_T^n(1-\mu_T)^{N_T-n}\label{eqn subfile size}
\end{align}
as the fractional size of each subfile cached exactly at each node set with $m$ UEs and $n$ ENs, similar to \cite{decentralized}.

\subsection{Two-Hop Content Delivery}
Each UE $q$ requests a file $W_{d_q}$ from the database. We denote ${\bf d}\triangleq(d_q)^{N_R}_{q=1}\in[N]^{N_R}$ as the demand vector from all the $N_R$ UEs.
The content delivery phase is a two-hop transmission process, the first hop being the fronthaul channel and the second hop being the access channel.

\subsubsection{Fronthaul Transmission}\label{section fronthaul model}
The MBS employs an encoding function $\Lambda_C$ to map the entire database, UE demand ${\bf d}$, and channel realization $\mathbf{G}_F\!\triangleq\! \{g_{p}(t)\!:\!\forall p \!\in\! [N_T],\forall t\!\in\! [T_F]\}$ to a length-$T_F$ codeword $(S[t])^{T_F}_{t=1}$ with an average transmit power constraint $P^r$, i.e., $ \frac{1}{T_F}\!\sum_{t=1}^{T_F}\!|S(t)|^2\!\le\! P^r$, where $r\!>\!0$ is the power scaling of the fronthaul link compared to the access link with power $P$. Note that $r$ can also be viewed as the multiplexing gain in the fronthaul link \cite{girgis2017decentralized}.

\subsubsection{Access Transmission}
In this paper, we assume that all the ENs are half-duplex, which means that they cannot transmit over the access link while receiving from the fronthaul link at the same time. Thus, the fronthaul transmission and the access transmission take place in serial. After receiving signals from the fronthaul link, each EN $p$ uses an encoding function $\Lambda_p$ to map its cached content $U_p$, UE demand ${\bf d}$, received signals $(Q_p[t])^{T_F}_{t=1}$, and channel realizations $\mathbf{G}_F$ and $\mathbf{H}_A\triangleq \{h_{qp}(t):\forall q \in [N_R], \forall p \in [N_T],\forall t\in [T_A]\}$ to a length-$T_A$ codeword $(X_p[t])^{T_A}_{t=1}$. Note that $T_F$ and $T_A$ may depend on the UE demand ${\bf d}$ and channel realizations $\mathbf{G}_F$ and $\mathbf{H}_A$. Each codeword $(X_p[t])_{t=1}^{T_A}$ has an average transmit power constraint $P$, i.e., $\frac{1}{T_A}\sum_{t=1}^{T_A}|X_p(t)|^2\le P$.

Upon receiving signals $(Y_q[t])^{T_A}_{t=1}$ in the access link, each UE $q$ employs a decoding function $\Gamma_q$ to decode $\hat{W}_{d_q}$ of its desired file $W_{d_q}$ from $(Y_q[t])^{T_A}_{t=1}$ along with its cached content $V_q$, UE demand ${\bf d}$, and channel realizations $\mathbf{G}_F$ and $\mathbf{H}_A$ as side information.

Define $P_\epsilon\triangleq\max_\mathbf{d}\max_{q}\mathbb{P}(\hat{W}_{d_q}\ne W_{d_q})$ as the worst-case error probability. A given set of coding functions $\{\Lambda_C, \Lambda_p,\Gamma_q: p\in[N_T],q\in[N_R]\}$ in the delivery phase is said to be feasible if, for almost all channel realizations, $P_\epsilon\to 0$ when $F\to\infty$.

\subsection{Performance Metric}\label{section ndt}
Following \cite{simeone}, we adopt \textit{normalized delivery time} (NDT) as the performance metric which is given by\footnote{The same metric is also defined in \cite{zhangElia} but under a different name.}
\begin{align}
\tau(\mu_R,\mu_T,r)\triangleq\lim_{P\to\infty}\lim_{F\to\infty}\sup\frac{\max_\mathbf{d}(T_F+T_A)}{F/\log P}.\label{eqn ndt definition}
\end{align}
We are interested in characterizing the minimum NDT of the network with decentralized caching which is defined as
\begin{align}
\tau^*(\mu_R,\mu_T,r)\!=\!\inf\{\tau(\mu_R,\mu_T,r)\!:\!\tau(\mu_R,\mu_T,r)\!\textrm{ is\! achievable}\}.\notag
\end{align}

\begin{remark}\label{remark ndt explaination}
Similar to \cite{Cran}, given the two-hop content delivery phase, the NDT can be rewritten as $\tau=\tau_F+\tau_A$, where $\tau_F$ and $\tau_A$ is the NDT in the fronthaul link and access link, respectively. Based on the power constraint of fronthaul codeword $(S[t])^{T_F}_{t=1}$, the fronthaul link carries $r\log P$ bits per channel use in the high SNR regime. Denote $R_F$ as the sum traffic load normalized by file size $F$ in the fronthaul link. We can rewrite $\tau_F$ as $\tau_F=R_F/r$. Similar to \cite[Remark 1]{mine}, denote $R_A$ as the per-user traffic load normalized by file size $F$ and $d$ as the per-user degrees of freedom (DoF) in the access link. We can rewrite $\tau_A$ as $\tau_A=R_A/d$. Therefore, the NDT can be expressed more conveniently as
\begin{align}
\tau=\tau_F+\tau_A=R_F/r+R_A/d.  \label{eqn ndt explaination}
\end{align}
\end{remark}

\section{Delivery Scheme for $3\times3$ Fog-RAN}\label{section delivery33}
In this section, we use a Fog-RAN with $N_T=N_R=3$ to illustrate the proposed delivery scheme. The scheme can be easily generated to a general Fog-RAN with arbitrary $N_T$ and arbitrary $N_R$, which is given in Section \ref{section upper bound}. We consider the worst-case scenario that each UE requests a distinct file. Note that when some UEs request the same file, the proposed delivery scheme can still be applied by treating the requests as being different. Without loss of generality, we assume that UE $q$  desires $W_q$, for $q\in[3]$. We denote $W_{q,\Phi,\Psi}$ as the subfile desired by UE $q$ and cached at UE set $\Phi$ and EN set $\Psi$. Its fractional size  is given by $f_{|\Phi|,|\Psi|}=\mu_R^{|\Phi|}(1-\mu_R)^{3-|\Phi|}\mu_T^{|\Psi|}(1-\mu_T)^{3-|\Psi|}$ based on \eqref{eqn subfile size}.

Excluding the locally cached subfiles,  each UE $q$, for $q\in[3]$, wants subfiles $\{W_{q,\Phi,\Psi}\!:\!\Phi\!\not\ni \!q,\Phi\!\subseteq\![3],\Psi\!\subseteq\![3]\}$. We divide the subfiles wanted by all UEs into different groups according to the size of $\Phi$ and $\Psi$, indexed by $\{(m,n)\!:\!m\!\in\![2]\cup\{0\}, n\!\in\![3]\cup\{0\}\}$, such that subfiles in group $(m,n)$ are cached at $m$ UEs and $n$ ENs. There are $3\binom{2}{m}\binom{3}{n}$ subfiles in group $(m,n)$. Each group of subfiles is delivered individually in the time division manner. In the following, we present the delivery strategy of  two representative groups, $(m,0)$ and $(m,1)$, where $m\in[2]\cup\{0\}$. Before that, let us introduce the cooperative X-multicast channel defined in \cite[Definition 2]{mine} which shall be mentioned throughout this section and Section \ref{section upper bound}.

\begin{definition}[\cite{mine}]\label{def channel}
The channel characterized as follows is referred to as the $\binom{N_T}{j}\times\binom{N_R}{m+1}$ cooperative X-multicast channel:
\begin{enumerate}
  \item there are $N_R$ UEs and $N_T$ ENs;
  \item each set of $m+1$ ($m< N_R$) UEs forms a UE multicast group;
  \item each set of $j$ ($j\le N_T$) ENs forms a EN cooperation group;
  \item each EN cooperation group has an independent message for each UE multicast group.
\end{enumerate}
\end{definition}

\subsection{Delivery of Group $(m,0)$}\label{section delivery33 m0}
Each subfile in group $(m,0)$ is desired by one UE, cached at $m$ other UEs  but none of ENs. Coded multicasting can be exploited through bit-wise XOR, similar to \cite{fundamentallimits}. In specific, the set of coded messages is given by
\begin{align}
  \bigg\{\!W_{\Phi^+,\emptyset}^\oplus\!\triangleq\! \bigoplus_{q\in\Phi^+}\! W_{q,\Phi^+\backslash \{q\},\emptyset}\!:\!\Phi^+\!\subseteq\![3],|\Phi^+|\!=\!m+1\!\bigg\}.\label{eqn example m0 1}
\end{align}
In this work, we focus on the case with $F \rightarrow \infty$ for analytical tractability.  By the law of large numbers, each coded message  $W_{\Phi^+,\emptyset}^\oplus$  has $f_{m,0}F$ bits, and is desired by UE set $\Phi^+$. Since the ENs do not have the coded messages in \eqref{eqn example m0 1}, these messages need to be generated at the MBS and then delivered to UEs via two compulsory hops, the fronthaul link and the access link. In the fronthaul link, we let the MBS naively multicast each coded message in \eqref{eqn example m0 1}  one by one to all three ENs. Thus, from Remark \ref{remark ndt explaination}, the NDT of the fronthaul link is given by
\begin{align}
  \tau_F=\frac{\binom{3}{m+1}f_{m,0}}{r}.\label{eqn 33m0 fronthaul}
\end{align}

By such naive multicasting in the fronthaul link, each EN now has access to all the coded messages in \eqref{eqn example m0 1}, and can transmit with full cooperation in the access link. The access channel thus becomes the $\binom{3}{3}\times\binom{3}{m+1}$ cooperative X-multicast channel in Definition \ref{def channel}, whose achievable per-user DoF is $d_{m,3}=1$ in \cite[Lemma 1]{mine}. Since each UE desires $\binom{2}{m}$ coded messages, from Remark \ref{remark ndt explaination}, the NDT of the access link is given by
\begin{align}
  \tau_A=\frac{\binom{2}{m}f_{m,0}}{d_{m,3}}=\binom{2}{m}f_{m,0}.\label{eqn 33m0 access}
\end{align}

Summing up \eqref{eqn 33m0 fronthaul} and \eqref{eqn 33m0 access}, the total NDT for group $(m,0)$ is
\begin{equation}
  \tau_{m,0}=\frac{\binom{3}{m+1}f_{m,0}}{r}+\binom{2}{m}f_{m,0}.\notag
\end{equation}

\subsection{Delivery of Group $(m,1)$}\label{section delivery33 m1}
Unlike the subfiles in group $(m,0)$, each subfile in group $(m,1)$ is already cached at one EN, and therefore the coded messages can be generated at each EN locally. In specific, each EN $p$, for $p\in[3]$, generates:
\begin{align}
   \bigg\{\!W_{\Phi^+\!,\{p\}}^\oplus\!\triangleq\! \bigoplus_{q\in\Phi^+}\! W_{q,\Phi^+\backslash \{q\},\{p\}}\!\!:\!\Phi^+\!\subseteq\![3],\!|\Phi^+\!|\!=\!m\!+\!1\!\bigg\}\!.\!\label{eqn example m1 1}
\end{align}
Each coded message $W_{\Phi^+,\{p\}}^\oplus$ has $f_{m,1}F$ bits, and is desired by UE set $\Phi^+$. These coded messages can be delivered to UEs via one hop in the access link without the use of fronthaul link or delivered via two hops with the aid of fronthaul link.

\subsubsection{Without Fronthaul}
Each EN $p$, for $p\in[3]$, sends $\{W_{\Phi^+,\{p\}}^\oplus\}$ in the access link, and the access channel becomes the $\binom{3}{1}\times\binom{3}{m+1}$ cooperative X-multicast channel with achievable per-user DoF $d_{m,1}$  in \cite[Lemma 1]{mine}. Since each UE desires $3\binom{2}{m}$ messages, the NDT is given by
\begin{align}
  \tau=\frac{3\binom{2}{m}f_{m,1}}{d_{m,1}}.\label{eqn 33m1 nofronthaul}
\end{align}
\subsubsection{With Fronthaul}
With the aid of fronthaul, we can allow ENs to access the coded messages of others via the transmission of the MBS in the fronthaul link, thereby enabling transmission cooperation among ENs in the access link. As a price to pay for the EN cooperation gain, additional fronthaul delivery latency is caused.  Thus, the optimal cooperation strategy should balance the time between the access link and the fronthaul link.

Assume that after the aid of fronthaul transmission, every set of $1+i$ ENs can form a cooperation group in the access link, where $i\!\in\! [2]$ is a design parameter to balance the tradeoff mentioned above. We split each message $W_{\Phi^+,\{p\}}^\oplus$ in \eqref{eqn example m1 1} into $\binom{2}{i}$ sub-messages $\{W_{\Phi^+,\{p\}}^{\oplus,\Psi^+}:\Psi^+\subseteq[3],|\Psi^+|=1+i,\Psi^+\ni p\}$, each with $f_{m,1}/\binom{2}{i}F$ bits and sent by EN set $\Psi^+$ exclusively in the access transmission. Consider an arbitrary EN set $\Psi^+$ with size $1+i$. ENs in $\Psi^+$ need to send sub-messages
\begin{align}
\left\{W_{\Phi^+,\{p\}}^{\oplus,\Psi^+}:\Phi^+\subseteq[3],|\Phi^+|=m+1,p\in\Psi^+\right\}\label{eqn 33m1 fronthaul 1}
\end{align}
to UEs, and the MBS needs to send $W_{\Phi^+,\{p\}}^{\oplus,\Psi^+}$ to ENs $\{p':p'\in\Psi^+\backslash \{p\}\}$ which do not cache it. Given that each sub-message is already cached at one EN, coded multicasting can be used in the fronthaul transmission. In specific, the MBS sends coded sub-messages
\begin{align}
  \left\{ W_{\Phi^+,\{p\}}^{\oplus,\Psi^+}\oplus W_{\Phi^+,\{p'\}}^{\oplus,\Psi^+}\!:\!\Phi^+\subseteq[3],|\Phi^+|=m+1,p,p'\in\Psi^+\right\}\notag
\end{align}
to EN set $\Psi^+$. Note that in \cite{Cran}, the MBS sends coded messages generated directly from subfiles $\{W_{q,\Phi,\{p\}}\}$ to ENs, not from the coded messages in \eqref{eqn 33m1 fronthaul 1}. Compared to \cite{Cran}, an additional layer of XOR combining on top of the coded messages desired by UEs is exploited in our scheme. Upon receiving the above  coded sub-messages, each EN in $\Psi^+$ can decode its desired sub-messages with  its local cache. The fronthaul NDT for the given $i$ is  thus given by
\begin{equation}
  \tau_F=\frac{\binom{3}{m+1}\binom{3}{1+i}\binom{1+i}{2}f_{m,1}}{r\binom{2}{i}}=\frac{3\binom{3}{m+1}if_{m,1}}{2r}.\label{eqn 33m1 fronthaul 2}
\end{equation}

By such coded multicasting in the fronthaul link, each EN set $\Psi^+$ with size $1+i$ can cooperatively send sub-messages in \eqref{eqn 33m1 fronthaul 1}, each desired by $m+1$ UEs. The network in the access transmission is upgraded into the $\binom{3}{1+i}\times\binom{3}{m+1}$ cooperative X-multicast channel with achievable per-user DoF $d_{m,1+i}$ in \cite[Lemma 1]{mine}. Since each UE wants $\binom{2}{m}\binom{3}{1+i}\binom{1+i}{1}$ sub-messages, each with $f_{m,1}/\binom{2}{i}F$ bits, the access NDT is
\begin{equation}
  \tau_A=\frac{\binom{2}{m}\binom{3}{1+i}\binom{1+i}{1}f_{m,1}}{\binom{2}{i}d_{m,1+i}}=\frac{3\binom{2}{m}f_{m,1}}{d_{m,1+i}}.\label{eqn 33m1 fronthaul 3}
\end{equation}
Summing up \eqref{eqn 33m1 fronthaul 2} and \eqref{eqn 33m1 fronthaul 3}, the total NDT is given by
\begin{equation}
  \tau=\frac{3\binom{3}{m+1}if_{m,1}}{2r}+\frac{3\binom{2}{m}f_{m,1}}{d_{m,1+i}}.\label{eqn 33m1 fronthaul 4}
\end{equation}

Choosing the smallest NDT among \eqref{eqn 33m1 nofronthaul} without fronthaul delivery and \eqref{eqn 33m1 fronthaul 4} with fronthaul delivery  for all possible $i$, we obtain the NDT for group $(m,1)$ as
\begin{equation}
  \tau_{m,1}=\min_{i\in[2]\cup\{0\}} \left\{\frac{3\binom{3}{m+1}if_{m,1}}{2r}+\frac{3\binom{2}{m}f_{m,1}}{d_{m,1+i}}\right\}.\notag
\end{equation}

Considering all possible $m$ and $n$, the achievable NDT of the $3\times3$ Fog-RAN with decentralized caching is $\sum_{m=0}^2 \sum_{n=0}^3 \tau_{m,n}$, where $\tau_{m,n}$ is the  NDT for group $(m,n)$.

\setcounter{subsubsection}{0}

\section{Achievable Upper Bound of NDT}\label{section upper bound}
Generalizing the achievable scheme in Section \ref{section delivery33} to arbitrary $N_T,N_R \ge 2$, we obtain an achievable upper bound of the minimum NDT in the Fog-RAN with decentralized cache placement in the following theorem.

\begin{theorem}[Achievable NDT] \label{thm 1}
For the cache-aided Fog-RAN with $N_T\ge2$ ENs, each with a cache of normalized size $\mu_T$,  $N_R\ge2$ UEs, each with a cache of normalized size $\mu_R$,  $N\ge N_R$ files, and a wireless fronthaul link with power scaling $r>0$, the minimum NDT achieved by decentralized caching is upper bounded by $\tau_{upper}=\sum_{m=0}^{N_R-1}\sum_{n=0}^{N_T}\tau_{m,n}$, where
\begin{equation}
  \tau_{m,0}=\frac{\binom{N_R}{m+1}f_{m,0}}{r}+\frac{\binom{N_R-1}{m}f_{m,0}}{d_{m,N_T}},\label{eqn tau m0}
\end{equation}
and
\begin{align}
\tau_{m,n}= \min_{i \in [N_T-n]\cup\{0\}} \tau_{m,n}^i, \label{eqn tau mn gamma}
\end{align}
when $n\ge1$, with
\begin{equation}
  \tau_{m,n}^i\!\!=\!\!\frac{\binom{N_R}{m+1}\!\binom{N_T}{n}\!\min\!\big\{\!1,\!\frac{i}{n+1}\!\big\}\!f_{m,n}}{r}\!\!+\!\!\frac{\binom{N_R\!-\!1}{m}\!\binom{N_T}{n}\!f_{m,n}}{d_{m,n+i}}.\label{eqn thm 1}
\end{equation}
Here $f_{m,n}$ is defined in \eqref{eqn subfile size}, and $d_{m,j}$ is the achievable per-user DoF of the $\binom{N_T}{j}\times\binom{N_R}{m+1}$ cooperative X-multicast channel given in \cite[Lemma 1]{mine}.
\end{theorem}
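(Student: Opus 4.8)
The plan is to lift the $3\times 3$ construction of Section~\ref{section delivery33} to arbitrary $N_T,N_R\ge2$ and to verify that the resulting counts collapse to \eqref{eqn tau m0}--\eqref{eqn thm 1}. I would first recall the subfile structure: decentralized placement splits each $W_n$ into pieces $W_{n,\Phi,\Psi}$ of fractional size $f_{|\Phi|,|\Psi|}$ (see \eqref{eqn subfile size}) cached exactly at UE set $\Phi$ and EN set $\Psi$, and after removing what it stores locally UE $q$ still needs $\{W_{d_q,\Phi,\Psi}:q\notin\Phi\}$. As in the example, I would group these pieces by $(m,n)=(|\Phi|,|\Psi|)$ with $m\in\{0,\ldots,N_R-1\}$ and $n\in\{0,\ldots,N_T\}$, serve the groups in disjoint time slots, and sum the per-group NDTs; it then remains to prove \eqref{eqn tau m0} for $n=0$ and \eqref{eqn tau mn gamma}--\eqref{eqn thm 1} for $n\ge1$. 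Within group $(m,n)$ I would, for every UE multicast group $\Phi^+$ with $|\Phi^+|=m+1$ and every EN set $\Psi$ with $|\Psi|=n$, build the XOR $W^\oplus_{\Phi^+,\Psi}=\bigoplus_{q\in\Phi^+}W_{d_q,\Phi^+\backslash\{q\},\Psi}$, and record the three facts that drive everything: there are $\binom{N_R}{m+1}\binom{N_T}{n}$ such messages, each of size $f_{m,n}F$; each is computable from the local cache of any EN in $\Psi$; and each UE lies in $\binom{N_R-1}{m}$ of the sets $\Phi^+$.

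For group $(m,0)$ none of the $\binom{N_R}{m+1}$ XORs sits at any EN, so I would have the MBS multicast them one at a time over the fronthaul to all $N_T$ ENs; by Remark~\ref{remark ndt explaination}, $\tau_F=\binom{N_R}{m+1}f_{m,0}/r$. Every EN then holds every message, the access link becomes the $\binom{N_T}{N_T}\times\binom{N_R}{m+1}$ cooperative X-multicast channel, and \cite[Lemma~1]{mine} serves the $\binom{N_R-1}{m}$ messages wanted by each UE at per-user DoF $d_{m,N_T}$, so $\tau_A=\binom{N_R-1}{m}f_{m,0}/d_{m,N_T}$; adding the two gives \eqref{eqn tau m0}.

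For group $(m,n)$ with $n\ge1$ I would fix the cooperation parameter $i\in[N_T-n]\cup\{0\}$ and split each $W^\oplus_{\Phi^+,\Psi}$ into $\binom{N_T-n}{i}$ equal sub-messages $W^{\oplus,\Psi^+}_{\Phi^+,\Psi}$, one per EN superset $\Psi^+\supseteq\Psi$ of size $n+i$, to be transmitted by $\Psi^+$ exclusively. On the fronthaul, for each fixed $(\Phi^+,\Psi^+)$ the $\binom{n+i}{n}$ sub-messages $\{W^{\oplus,\Psi^+}_{\Phi^+,\Psi}:\Psi\subseteq\Psi^+\}$ form a coded-multicasting instance in the sense of \cite{fundamentallimits} on the $n+i$ ENs of $\Psi^+$ (each available at the $n$ ENs of its $\Psi$ and wanted by the other $i$), so the MBS sends, for every $(n+1)$-subset $\mathcal S\subseteq\Psi^+$, the combination $\bigoplus_{p\in\mathcal S}W^{\oplus,\Psi^+}_{\Phi^+,\mathcal S\backslash\{p\}}$ --- the extra XOR layer --- which each EN of $\mathcal S$ resolves from its own cache. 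That is $\binom{N_R}{m+1}\binom{N_T}{n+i}\binom{n+i}{n+1}$ transmissions of size $f_{m,n}F/\binom{N_T-n}{i}$, which I would collapse via $\binom{N_T}{n+i}\binom{n+i}{n+1}=\frac{i}{n+1}\binom{N_T}{n}\binom{N_T-n}{i}$ to a fronthaul load of $\frac{i}{n+1}\binom{N_R}{m+1}\binom{N_T}{n}f_{m,n}$; comparing this with the $\binom{N_R}{m+1}\binom{N_T}{n}f_{m,n}$ cost of simply broadcasting each of the $\binom{N_R}{m+1}\binom{N_T}{n}$ coded messages once to all ENs, and keeping whichever is cheaper (with $i=0$ read as using no fronthaul at all), produces $\tau_F=\min\{1,i/(n+1)\}\binom{N_R}{m+1}\binom{N_T}{n}f_{m,n}/r$. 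On the access link every size-$(n+i)$ EN set can now cooperate, so the channel is the $\binom{N_T}{n+i}\times\binom{N_R}{m+1}$ cooperative X-multicast channel with per-user DoF $d_{m,n+i}$ (\cite[Lemma~1]{mine}); since each UE needs all $\binom{N_T-n}{i}$ sub-messages of each of its $\binom{N_R-1}{m}\binom{N_T}{n}$ desired XORs, which by $\binom{N_T}{n+i}\binom{n+i}{n}=\binom{N_T}{n}\binom{N_T-n}{i}$ amounts to a per-user load of $\binom{N_R-1}{m}\binom{N_T}{n}f_{m,n}$, we get $\tau_A=\binom{N_R-1}{m}\binom{N_T}{n}f_{m,n}/d_{m,n+i}$. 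Adding reproduces $\tau^i_{m,n}$ in \eqref{eqn thm 1}; minimizing over $i$ gives \eqref{eqn tau mn gamma}; and summing $\tau_{m,n}$ over all $(m,n)$ yields $\tau_{upper}$. Feasibility is routine: fronthaul decoding is a lossless XOR given the caches, access decoding invokes the scheme of \cite[Lemma~1]{mine} which already satisfies $P_\epsilon\to0$ as $F\to\infty$, and the $o(F)$ fluctuations in subfile sizes vanish in the limit.

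The step I expect to be the main obstacle is the fronthaul bookkeeping for $n\ge1$: confirming that the added XOR layer is decodable at \emph{every} EN of $\Psi^+$ from its local cache alone, that $\binom{n+i}{n+1}$ coded transmissions per $(\Phi^+,\Psi^+)$ really suffice, and that the raw counts --- through the two binomial identities above --- collapse to the clean factor $\min\{1,i/(n+1)\}$ and the denominator $d_{m,n+i}$ in \eqref{eqn thm 1}. The rest is a direct, if somewhat tedious, generalization of Section~\ref{section delivery33}.
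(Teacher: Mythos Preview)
Your proposal is correct and follows essentially the same approach as the paper's proof: group-by-group delivery, XOR coding across UE multicast groups, sub-message splitting indexed by EN supersets $\Psi^+$, an extra XOR layer on the fronthaul for coded multicasting among ENs, and the cooperative X-multicast access scheme of \cite[Lemma~1]{mine}. The only cosmetic difference is that the paper realizes the factor~$1$ in $\min\{1,i/(n+1)\}$ by sending the \emph{sub}-messages $\{W^{\oplus,\Psi^+}_{\Phi^+,\Psi}\}$ uncoded one by one (rather than broadcasting the unsplit messages to all ENs), but the resulting fronthaul load and the access-link cooperation level are identical, so the formulas \eqref{eqn tau m0}--\eqref{eqn thm 1} come out the same.
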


\begin{proof}
Similar to Section \ref{section delivery33}, we assume that UE $q$, for $q\in[N_R]$, desires $W_q$ in the delivery phase. Excluding the locally cached subfiles,  each UE $q$, for $q\in[N_R]$, wants subfiles $\{W_{q,\Phi,\Psi}:\Phi\not\ni q,\Phi\subseteq[N_R],\Psi\subseteq[N_T]\}$. We divide the subfiles wanted by all UEs into different groups according to the size of $\Phi$ and $\Psi$, indexed by $\{(m,n):0\le m\le N_R-1, 0\le n\le N_T,m,n\in\mathbb{Z}\}$, such that subfiles in group $(m,n)$ are cached at $m$ UEs and $n$ ENs. There are $N_R\binom{N_R-1}{m}\binom{N_T}{n}$ subfiles in group $(m,n)$, each with fractional size $f_{m,n}$. Each group of subfiles is delivered individually in the time division manner. Without loss of generality, we present the delivery strategy for an arbitrary group $(m,n)$.  The delivery strategy is also given in Algorithm \ref{algorithm 1}.

\begin{algorithm}[!t]
\caption{Delivery scheme for $N_T\times N_R$ Fog-RAN with wireless fronthaul}\label{algorithm 1}
\begin{algorithmic}[1]
\FOR{$m=0,1,\ldots,N_R-1$}
\FOR{$n=0,1,\ldots,N_T$}
\IF{$n=0$}\STATE Generate coded messages $\{W_{\Phi^+,\emptyset}^{\oplus}\triangleq\bigoplus\limits_{q\in\Phi^+} W_{q,\Phi^+\backslash\{q\},\emptyset}:\Phi^+\subseteq[N_R],|\Phi^+|=m+1\}$, each desired by $m+1$ UEs
\STATE The MBS sends messages  $\{W_{\Phi^+,\emptyset}^\oplus\}$ to all the $N_T$ ENs one by one
\STATE The network topology in the access link is changed into the $\binom{N_T}{N_T}\times\binom{N_R}{m+1}$ cooperative X-multicast channel whose achievable per-user DoF is $d_{m,N_T}$ in \cite[Lemma 1]{mine}\ELSE
\STATE Generate coded messages $\{W_{\Phi^+,\Psi}^{\oplus}\triangleq\bigoplus\limits_{q\in\Phi^+} W_{q,\Phi^+\backslash\{q\},\Psi}:\Phi^+\subseteq[N_R],|\Phi^+|=m+1,\Psi\subseteq[N_T],|\Psi|=n\}$
\STATE Let $i=\arg\min_{i}  \tau_{m,n}^i$ in \eqref{eqn tau mn gamma}
\STATE Split each coded message $W_{\Phi^+,\Psi}^{\oplus}$ into $\binom{N_T-n}{i}$ sub-messages $\{W_{\Phi^+,\Psi}^{\oplus,\Psi^+}\}$, each with fractional size $\frac{f_{m,n}}{\binom{N_T-n}{i}}$ and corresponding to a unique EN set $\Psi^+:|\Psi^+|=n+i, \Psi^+\supseteq\Psi$
\FOR{$\Psi^+\subseteq[N_T],|\Psi^+|=n+i$}
\FOR{$\Phi^+\subseteq[N_R],|\Phi^+|=m+1$}
\IF{$\binom{n+i}{n}\le\binom{n+i}{n+1}$}\STATE The MBS sends sub-messages  $\{W_{\Phi^+,\Psi}^{\oplus,\Psi^+}:\Psi\subseteq\Psi^+,|\Psi|=n\}$ to EN set $\Psi^+$ one by one\ELSE
\STATE The MBS sends coded sub-messages  $\{\bigoplus_{\Psi\subset\Psi'} W_{\Phi^+,\Psi}^{\oplus,\Psi^+}:\Psi'\subseteq\Psi^+,|\Psi'|=n+1,|\Psi|=n\}$ to EN set $\Psi^+$\ENDIF
\STATE ENs in $\Psi^+$ can access $\{W_{\Phi^+,\Psi}^{\oplus,\Psi^+}:\Psi\subseteq\Psi^+,|\Psi|=n\}$ desired by UE set $\Phi^+$.
\ENDFOR
\ENDFOR
\STATE The network topology in the access link is changed into the $\binom{N_T}{n+i}\times\binom{N_R}{m+1}$ cooperative X-multicast channel whose achievable per-user DoF is $d_{m,n+i}$ in \cite[Lemma 1]{mine}
\ENDIF
\ENDFOR
\ENDFOR
\end{algorithmic}
\end{algorithm}

\subsubsection{$n=0$}
Note that each subfile in group $(m,0)$ is desired by one UE, and already cached at $m$ different UEs but none of ENs. Coded multicasting approach can be used, similar to Section \ref{section delivery33 m0}. In specific, the coded messages are given by
\begin{align}
\left\{W_{\Phi^+,\emptyset}^{\oplus}\triangleq\bigoplus_{q\in\Phi^+} W_{q,\Phi^+\backslash\{q\},\emptyset}:\Phi^+\subseteq[N_R],|\Phi^+|=m+1\right\}.\label{eqn m0 message}
\end{align}
Each coded message $W_{\Phi^+,\emptyset}^{\oplus}$ is desired by UE set $\Phi^+$. (If $m=0$, each coded message $W_{\Phi^+,\emptyset}^\oplus$ degenerates to subfile $W_{q,\emptyset,\emptyset}$ for $\Phi^+=\{q\}$.) These messages need to be generated at the MBS and then delivered to UEs via the fronthaul link and the access link. In the fronthaul link, we let the MBS multicast each coded message in \eqref{eqn m0 message} to all the $N_T$ ENs one
by one. The fronthaul NDT is given by
\begin{align}
\tau_F=\frac{\binom{N_R}{m+1}f_{m,0}}{r}.\label{eqn m0 fronthaul}
\end{align}

By such naive multicast transmission in the fronthaul link, each EN now has access to all the coded messages in \eqref{eqn m0 message}, and can cooperatively transmit together in the access link. The access channel thus becomes the $\binom{N_T}{N_T}\times\binom{N_R}{m+1}$ cooperative X-multicast channel with achievable per-user DoF $d_{m,N_T}$ in \cite[Lemma 1]{mine}. Since each UE desires $\binom{N_R-1}{m}$ messages, the access NDT is given by
\begin{align}
 \tau_A=\frac{\binom{N_R-1}{m}f_{m,0}}{d_{m,N_T}}.\label{eqn m0 access}
\end{align}
Combining \eqref{eqn m0 fronthaul} and \eqref{eqn m0 access}, the achievable NDT for the delivery of group $(m,0)$ is
\begin{align}
  \tau_{m,0}=\frac{\binom{N_R}{m+1}f_{m,0}}{r}+\frac{\binom{N_R-1}{m}f_{m,0}}{d_{m,N_T}}.\label{eqn m0}
\end{align}

\subsubsection{$n>0$}
Note that each subfile in group $(m,n)$ is desired by one UE, and already cached at $m$ different UEs and $n$ different ENs. Coded multicasting approach can be used, similar to Section \ref{section delivery33 m1}. In specific, given an arbitrary UE set $\Phi^+$ with size $|\Phi^+|=m+1$ and an arbitrary EN set $\Psi$ with size $n$, each EN in $\Psi$ generates the coded message $W_{\Phi^+,\Psi}^{\oplus}\triangleq\bigoplus_{q\in\Phi^+} W_{q,\Phi^+\backslash\{q\},\Psi}$ desired by all UEs in $\Phi^+$. (If $m=0$, coded message $W_{\Phi^+,\Psi}^\oplus$ degenerates to subfile $W_{q,\emptyset,\Psi}$ for $\Phi^+=\{q\}$.) Through this coded multicasting approach, $m+1$ different subfiles are combined into a single coded message via XOR, and there are only $\binom{N_R}{m+1}\binom{N_T}{n}$ coded messages to be transmitted in total, each available at $n$ ENs and desired by $m+1$ UEs.

Similar to Section \ref{section delivery33 m1}, with the aid of fronthaul, we can allow ENs to access the coded messages of others via the transmission of the MBS in the fronthaul link, thereby enabling chances for more transmission cooperation in the access link. Assume that after the aid of fronthaul transmission, every set of $n + i$ ENs can form a cooperation group in the access link, where $i \in [N_T-n]\cup\{0\}$ is a design parameter.\footnote{If $i=0$, every set of $n$ ENs already forms a cooperation group in the access link, and the coded messages can be delivered to UEs directly in the access link without the use of fornthaul link. The access channel becomes the $\binom{N_T}{n}\times\binom{N_R}{m+1}$ cooperative X-multicast channel in \cite{mine}.} We split each message $W_{\Phi^+,\Psi}^{\oplus}$ into $\binom{N_T-n}{i}$ sub-messages, each with fractional size $f_{m,n}/\binom{N_T-n}{i}$ and corresponding to a distinct EN set $\Psi^+$ with size $n+i$ such that $\Psi^+\supseteq\Psi$. Denote $W_{\Phi^+,\Psi}^{\oplus,\Psi^+}$ as the sub-message in $W_{\Phi^+,\Psi}^{\oplus}$, which is desired by UE set $\Phi^+$, cached at EN set $\Psi$, and corresponding to EN set $\Psi^+$. Each sub-message $W_{\Phi^+,\Psi}^{\oplus,\Psi^+}$ is sent by EN set $\Psi^+$ exclusively in the access link. Then, for an arbitrary EN set $\Psi^+$ with size $n+i$, each EN in $\Psi^+$ needs to access all the sub-messages
\begin{align}
  \left\{W_{\Phi^+,\Psi}^{\oplus,\Psi^+}:\Phi^+\subseteq[N_R],|\Phi^+|=m+1,\Psi\subseteq\Psi^+,|\Psi|=n\right\}.\label{eqn mn 1}
\end{align}

To do this, the MBS choose one of the two methods below to send sub-messages to ENs in the fronthaul link.
\begin{enumerate}
  \item  Fronthaul Transmission without Coded Multicasting: For each EN set $\Psi^+$, the MBS directly sends sub-messages in \eqref{eqn mn 1} one-by-one, and each EN in $\Psi^+$ decodes all the non-cached sub-messages. By this method, the NDT in the fronthaul link is given by
\begin{align}
 \tau_F^1=\frac{1}{r}\binom{N_R}{m+1}\binom{N_T}{n+i}\binom{n+i}{n}\frac{f_{m,n}}{\binom{N_T-n}{i}}.\label{eqn fronthaul 1}
\end{align}
  \item Fronthaul Transmission with Coded Multicasting: Note that each sub-message is already cached at $n$ ENs. The MBS can exploit coded multicasting opportunities in the fronthaul link. In specific, for each EN set $\Psi^+$, the MBS sends coded sub-messages
\begin{align}
  \left\{\bigoplus_{\Psi\subset\Psi'} W_{\Phi^+,\Psi}^{\oplus,\Psi^+}:\Phi^+\subseteq[N_R],|\Phi^+|=m+1,\Psi'\subseteq\Psi^+,\right.\notag\\
  \left.|\Psi'|=n+1,|\Psi|=n\right\}.\notag
\end{align}
For each coded sub-message $\bigoplus_{\Psi\subset\Psi'} W_{\Phi^+,\Psi}^{\oplus,\Psi^+}$, each EN $p$ in $\Psi'$ caches $n$ sub-messages $\{W_{\Phi^+,\Psi}^{\oplus,\Psi^+}:p\in\Psi,\Psi\subset\Psi'\}$, and can decode the non-cached sub-message $\{W_{\Phi^+,\Psi}^{\oplus,\Psi^+}:p\notin\Psi,\Psi\subset\Psi'\}$. By this method the NDT in the fronthaul link is given by
\begin{align}
   \tau_F^2=\frac{1}{r}\binom{N_R}{m+1}\binom{N_T}{n+i}\binom{n+i}{n+1}\frac{f_{m,n}}{\binom{N_T-n}{i}}.\label{eqn fronthaul 2}
\end{align}.
\end{enumerate}

Choosing the smaller one between \eqref{eqn fronthaul 1} and \eqref{eqn fronthaul 2}, the fronthaul NDT is given by
\begin{align}
  \tau_F  =&\frac{1}{r}\binom{N_R}{m+1}\binom{N_T}{n+i}\frac{f_{m,n}}{\binom{N_T-n}{i}}\min\left\{\binom{n+i}{n},\binom{n+i}{n+1}\right\}\notag\\
  =&\binom{N_R}{m+1}\binom{N_T}{n}\min\left\{1,\frac{i}{n+1}\right\}\frac{f_{m,n}}{r}.\label{eqn fronthaul ndt}
\end{align}

Then in the access link, for an arbitrary EN set $\Psi^+$ with size $n+i$, each EN in $\Psi^+$ cooperatively sends sub-messages in \eqref{eqn mn 1}. The access channel is changed to the $\binom{N_T}{n+i}\times\binom{N_R}{m+1}$ cooperative X-multicast channel with achievable per-user DoF $d_{m,n+i}$ in \cite[Lemma 1]{mine}. Since each UE $q$, for $q\in[N_R]$, wants $\binom{N_R-1}{m}\binom{N_T}{n+i}\binom{n+i}{n}$ sub-messages, the access NDT is
\begin{align}
  \tau_A&=\binom{N_R-1}{m}\binom{N_T}{n+i}\frac{\binom{n+i}{n}}{\binom{N_T-n}{i}}\frac{f_{m,n}}{d_{m,n+i}}\notag\\
  &=\binom{N_R-1}{m}\binom{N_T}{n}\frac{f_{m,n}}{d_{m,n+i}}.\label{eqn access ndt}
\end{align}.

Combining \eqref{eqn fronthaul ndt} and \eqref{eqn access ndt} and taking the minimum of NDT over $i$, we obtain the NDT for the delivery of group $(m,n)$ as
\begin{align}
\tau_{m,n}=\min_{i \in [N_T-n]\cup\{0\}} \tau_{m,n}^i,\label{eqn mn}
\end{align}
where
\begin{align}
  \tau_{m,n}^i=&\binom{N_R}{m+1}\binom{N_T}{n}\min\left\{1,\frac{i}{n+1}\right\}\frac{f_{m,n}}{r}\notag\\
  &+\binom{N_R-1}{m}\binom{N_T}{n}\frac{f_{m,n}}{d_{m,n+i}}.\notag
\end{align}

Summing up NDTs in \eqref{eqn m0} and \eqref{eqn mn} for all groups, the total achievable NDT is
\begin{align}
  \tau=\sum_{m=0}^{N_R-1}\sum_{n=0}^{N_T}\tau_{m,n},\notag
\end{align}
which is the same as in Theorem \ref{thm 1}. Thus, Theorem \ref{thm 1} is proved.
\end{proof}
The first and second terms on the right hand side of both \eqref{eqn tau m0} and \eqref{eqn thm 1}  are the fronthaul NDT and the access NDT, respectively. It is clear that the fronthaul NDT decreases as the power scaling $r$ increases. When $r\! \rightarrow\! \infty$, the fronthaul NDT approaches zero, and the overall achievable NDT is dominated by the access NDT, given by
\begin{equation}
\begin{split}
  \lim_{r \rightarrow \infty}\tau_{upper}=& \sum_{m=0}^{N_R-1}\sum_{n=0}^{N_T}\frac{\binom{N_R-1}{m}\binom{N_T}{n}f_{m,n}}{d_{m,N_T}}\notag\\
  =&\sum_{m=0}^{N_R-1}\frac{\binom{N_R-1}{m}\mu_R^m(1-\mu_R)^{N_R-m}}{d_{m,N_T}},\notag
\end{split}
\end{equation}
which is equivalent to the NDT when $\mu_T\!=\!1$. This means that when the fronthaul capacity is large enough, the fronthaul transmission time can be ignored and hence each EN can access the entire database as when $\mu_T\! =\! 1$. The detailed discussion of the achievable NDT in Theorem \ref{thm 1} and its comparison to \cite{girgis2017decentralized,simeone,Cran,Joan2018storage} are given in Section \ref{section discussion}.

\setcounter{subsubsection}{0}

\section{Lower Bound of NDT}\label{section converse}
In this section, we present a lower bound of the minimum NDT, based on which we show that the achievable scheme is order-optimal.

\begin{theorem}[Lower bound of NDT]\label{thm 2}
For the cache-aided Fog-RAN with $N_T\ge2$ ENs, each with a cache of normalized size $\mu_T$,  $N_R\ge2$ UEs, each with a cache of normalized size $\mu_R$,  $N\ge N_R$ files, and a wireless fronthaul link with power scaling $r>0$, the minimum NDT achieved by decentralized caching is lower bounded by
\begin{align}
  \tau_{lower}\!=\!\max_{l_1\in[N_R]}\!\frac{l_1\!(1\!-\!\mu_T\!)^{N_T}(1\!-\!\mu_R\!)^{l_1}}{r}\!+\!\max_{l_2\in[N_R]}\!\frac{l_2(1\!-\!\mu_R\!)^{l_2}}{\min\{l_2,\!N_T\!\}}.\label{eqn thm 2}
\end{align}
\end{theorem}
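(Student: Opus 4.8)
The plan is to derive the lower bound by decoupling the two-hop network into a fronthaul cut and an access cut, bounding the NDT contribution of each separately, and then summing. Since the fronthaul transmission and access transmission occur serially (the half-duplex assumption), we have $\tau = \tau_F + \tau_A$, so it suffices to lower-bound $\tau_F$ and $\tau_A$ individually and add the two bounds. The first term in \eqref{eqn thm 2} is the fronthaul bound and the second is the access bound.

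\textbf{Fronthaul cut.} First I would fix an integer $l_1 \in [N_R]$ and consider the genie-aided scenario in which we hand, for free, the caches $V_1,\dots,V_{l_1}$ and all channel realizations to a super-decoder, and we reveal the access-link outputs to it as well; the only bottleneck left is what must physically cross the fronthaul. The key combinatorial fact (from the decentralized placement and the law of large numbers, cf.\ \eqref{eqn subfile size}) is that the fraction of each requested file that is cached at \emph{none} of the $N_T$ ENs and at \emph{none} of a given set of $l_1$ UEs is exactly $(1-\mu_T)^{N_T}(1-\mu_R)^{l_1}$. Those bits are not available anywhere in the access network and must be conveyed over the shared fronthaul link, which carries $r\log P$ bits per channel use in the high-SNR regime (Remark \ref{remark ndt explaination}). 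Requesting $l_1$ distinct files whose pairwise-disjoint "missing" parts must all traverse the single fronthaul link forces $T_F \gtrsim \frac{l_1(1-\mu_T)^{N_T}(1-\mu_R)^{l_1} F}{r\log P}$, hence $\tau_F \ge l_1(1-\mu_T)^{N_T}(1-\mu_R)^{l_1}/r$; maximizing over $l_1$ gives the first term. Making the "distinct missing parts are genuinely disjoint and genuinely unavailable" step rigorous — using Fano/cut-set over $l_1$ UEs and carefully accounting for what the ENs already hold — is where the argument needs care.

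\textbf{Access cut.} For the access term I would fix $l_2\in[N_R]$, pick $l_2$ UEs with distinct demands, and apply a cut-set bound across the access channel: the $\min\{l_2,N_T\}$ transmit antennas that can possibly help these $l_2$ UEs provide at most $\min\{l_2,N_T\}\log P + o(\log P)$ degrees of freedom per channel use. Meanwhile the total number of bits these $l_2$ UEs must still receive over the access link, after subtracting what they have cached, is at least $l_2(1-\mu_R)^{l_2}F$ — this uses that the part of a file cached at a particular UE among the chosen $l_2$, but we only get to subtract the bits cached at that specific UE, so the irreducible fraction is $(1-\mu_R)^{l_2}$ when we are forced to deliver to all $l_2$ simultaneously (again via the subfile-size formula). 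This yields $T_A \ge \frac{l_2(1-\mu_R)^{l_2}F}{\min\{l_2,N_T\}\log P}$, i.e. $\tau_A \ge l_2(1-\mu_R)^{l_2}/\min\{l_2,N_T\}$, and maximizing over $l_2$ gives the second term.

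\textbf{Main obstacle.} The routine DoF/cut-set counting is standard; the delicate part is justifying that the decentralized, uncoordinated placement really does leave the stated irreducible fractions. Because caching is random and uncoded (no intra- or inter-file coding), one must argue that an adversarially-chosen demand vector of $l$ distinct files has, with high probability as $F\to\infty$, the claimed amount of content missing from the relevant node set, and that these missing portions across different requested files are information-theoretically independent of the side information available at the cut. I expect this concentration-plus-independence argument, combined with a clean Fano-inequality bookkeeping that separates the serial fronthaul and access phases without double-counting, to be the technical crux; the final step — adding the two maximized bounds because $\tau=\tau_F+\tau_A$ — is immediate.
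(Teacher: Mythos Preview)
Your proposal is correct and mirrors the paper's proof almost exactly: the paper also splits $\tau=\tau_F+\tau_A$ via the half-duplex assumption, applies a Fano/cut-set argument over $l_1$ UEs together with all EN caches and fronthaul outputs to obtain $\tau_F\ge l_1(1-\mu_T)^{N_T}(1-\mu_R)^{l_1}/r$, and a separate cut over $l_2$ UEs with their caches and the $N_T\times l_2$ MIMO DoF bound to obtain $\tau_A\ge l_2(1-\mu_R)^{l_2}/\min\{l_2,N_T\}$. The ``irreducible fraction'' computations you flag as the crux are handled in the paper precisely as you anticipate, by evaluating $H(U_{1\sim N_T},V_{1\sim l_1}\mid W_{l_1+1\sim N})$ and $H(V_{1\sim l_2}\mid W_{l_2+1\sim N})$ directly from the uncoded, independent decentralized placement.
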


\begin{proof}
Since this is the proof of a lower bound, we focus on a specific UE demand that each UE $q$ ($q\in[N_R]$) wants file $W_q$. Since ENs are assumed to be half-duplex, we will prove the lower bound of fronthaul NDT and access NDT separately.

\subsubsection{Fronthaul Transmission}
We first consider the fronthaul transmission. Consider the transmission of the files desired by the first $l_1$ UEs, for $l_1\in[N_R]$. The proof is based on the following observation. Given received signals $Q_{1\sim N_T}$ from the MBS at all ENs and the caches $U_{1\sim N_T}$ at all ENs, one can construct the transmitted signals of all ENs. Then, given all the transmitted signals from the ENs and caches $V_{1\sim l_1}$ at the first $l_1$ UEs, one can obtain the desired files of these UEs almost surely. We have
\begin{align}
H(W_{1\sim l_1}|Q_{1\sim N_T},U_{1\sim N_T},V_{1\sim l_1})=F\varepsilon_F+T_F\varepsilon_P\log P,\notag
\end{align}
where $W_{1\sim l_1}$ are files $\{W_1,W_2,\ldots,W_{l_1}\}$. Here, $\varepsilon_F$ and $\varepsilon_P$ are a function of file size $F$ and a function of power $P$, respectively, and satisfy $\lim_{F\to\infty}\varepsilon_F=0$, $\lim_{P\to\infty}\varepsilon_P=0$. Then, we have
\begin{subequations}\label{eqn converse 1}
\begin{align}
  l_1F=&H(W_{1\sim l_1}|W_{l_1+1\sim N})\label{eqn converse 11}\\
  =&I(W_{1\sim l_1};Q_{1\sim N_T},U_{1\sim N_T},V_{1\sim l_1}|W_{l_1+1\sim N})\notag\\
  &+H(W_{1\sim l_1}|Q_{1\sim N_T},U_{1\sim N_T},V_{1\sim l_1},W_{l_1+1\sim N})\label{eqn converse 12}\\
  =&H(Q_{1\sim N_T},U_{1\sim N_T},V_{1\sim l_1}|W_{l_1+1\sim N})\notag\\
  &-H(Q_{1\sim N_T},U_{1\sim N_T},V_{1\sim l_1}|W_{1\sim N})\notag\\
  &+F\varepsilon_F+T_F\varepsilon_P\log P\label{eqn converse 13}\\
  \le&H(Q_{1\sim N_T},U_{1\sim N_T},V_{1\sim l_1}|W_{l_1+1\sim N})\notag\\
  &+F\varepsilon_F+T_F\varepsilon_P\log P\label{eqn converse 14}\\
  \le& h(Q_{1\sim N_T})+H(U_{1\sim N_T},V_{1\sim l_1}|W_{l_1+1\sim N})\notag\\
  &+F\varepsilon_F+T_F\varepsilon_P\log P,\label{eqn converse 15}
\end{align}
\end{subequations}
where $W_{l_1+1\sim N}$ are files $\{W_{l_1+1},W_{l_1+2},\ldots,W_N\}$. Here, \eqref{eqn converse 12} and \eqref{eqn converse 13} come from the definition of mutual information; \eqref{eqn converse 15} comes from the fact that conditioning reduces entropy. In \eqref{eqn converse 15}, $h(Q_{1\sim N_T})$ is bounded by
\begin{subequations}\label{eqn converse 2}
\begin{align}
  h(Q_{1\sim N_T})=&I(Q_{1\sim N_T};S)+h(Q_{1\sim N_T}|S)\label{eqn converse 21}\\
  =&I(Q_{1\sim N_T};S)+T_F\varepsilon_P\log P\label{eqn converse 22}\\
  \le&T_F(r\log P+\varepsilon_P\log P)+T_F\varepsilon_P\log P.\label{eqn converse 23}
\end{align}
\end{subequations}
Here, $S$ is the transmitted signal of the MBS; \eqref{eqn converse 22} is due to the fact that the conditional entropy $h(Q_{1\sim N_T}|S)$ comes from the noise received at ENs; \eqref{eqn converse 23} follows from the capacity bound of the broadcast channel in high SNR regime. In \eqref{eqn converse 15}, $H(U_{1\sim N_T},V_{1\sim l_1}|W_{l_1+1\sim N})$ is given by
\begin{subequations}\label{eqn converse 3}
\begin{align}
  &H(U_{1\sim N_T},V_{1\sim l_1}|W_{l_1+1\sim N})\notag\\
  =&H(U_{1\sim N_T}^{1\sim l_1},V_{1\sim l_1}^{1\sim l_1})\label{eqn converse 31}\\
  =&\sum_{n=1}^{l_1} H(U_{1\sim N_T}^n,V_{1\sim l_1}^n)\label{eqn converse 32}\\
  =&l_1F\cdot [1-(1-\mu_T)^{N_T}(1-\mu_R)^{l_1}].\label{eqn converse 33}
\end{align}
\end{subequations}
Here, $U_{1\sim N_T}^{1\sim l_1}$, $V_{1\sim l_1}^{1\sim l_1}$ are the cached contents of files $\{W_1,W_2,\ldots,W_{l_1}\}$ at all the $N_T$ ENs and UEs $\{1,2,\ldots, l_1\}$, respectively, and $U_{1\sim N_T}^n,V_{1\sim l_1}^n$ are the cached contents of file $n$ at all the $N_T$ ENs and UEs $\{1,2,\ldots, l_1\}$, respectively; \eqref{eqn converse 31} and \eqref{eqn converse 32} come from the fact that only the cached contents of files $\{W_1,\ldots,W_{l_1}\}$ are unknown given files $\{W_{l_1+1},\ldots,W_N\}$ and that the cache scheme does not allow intra-file coding or inter-file coding; \eqref{eqn converse 33} comes from the fact that each EN and each UE caches a subset of $\mu_TF$ and $\mu_RF$ bits of each file independently and uniformly at random, respectively.

Combining \eqref{eqn converse 15}\eqref{eqn converse 23}\eqref{eqn converse 33}, and letting $F\rightarrow\infty$, $P\rightarrow\infty$, we obtain that
\begin{align}
  \lim_{P\rightarrow\infty}\lim_{F\rightarrow\infty}\frac{T_F\log P}{F}\ge \frac{1}{r}l_1(1-\mu_T)^{N_T}(1-\mu_R)^{l_1}.\label{eqn converse 4}
\end{align}

\subsubsection{Access Phase}
Next we consider the access transmission. The proof method is an extension of the approach in \cite[Section VI]{niesen} by taking decentralized cache scheme into account. Consider the first $l_2$ UEs, for $l_2\in[N_R]$. The proof is based on the following observation. Given the received signals $Y_{1\sim l_2}$ and the cached contents $V_{1\sim l_2}$ of the $l_2$ UEs, one can successfully decode the desired files of these $l_2$ UEs. Thus, we have
\begin{align}
  H(W_{1\sim l_2}|Y_{1\sim l_2},V_{1\sim l_2})=F\varepsilon_F.\notag
\end{align}
Similar to \eqref{eqn converse 1}, we have
\begin{subequations}\label{eqn converse 5}
\begin{align}
  l_2F=&H(W_{1\sim l_2}|W_{l_2+1\sim N})\label{eqn converse 51}\\
  =&I(W_{1\sim l_2};Y_{1\sim l_2},V_{1\sim l_2}|W_{l_2+1\sim N})\notag\\
  &+H(W_{1\sim l_2}|Y_{1\sim l_2},V_{1\sim l_2},W_{l_2+1\sim N})\label{eqn converse 52}\\
  =& I(W_{1\sim l_2};Y_{1\sim l_2},V_{1\sim l_2}|W_{l_2+1\sim N})+F\varepsilon_F\label{eqn converse 53}\\
  =&H(Y_{1\sim l_2},V_{1\sim l_2}|W_{l_2+1\sim N})\notag\\
  &-H(Y_{1\sim l_2},V_{1\sim l_2}|W_{1\sim N})+F\varepsilon_F\label{eqn converse 54}\\
  \le& H(Y_{1\sim l_2},V_{1\sim l_2}|W_{l_2+1\sim N})+F\varepsilon_F\label{eqn converse 55}\\
  \le& h(Y_{1\sim l_2})+H(V_{1\sim l_2}|W_{l_2+1\sim N})+F\varepsilon_F.\label{eqn converse 56}
\end{align}
\end{subequations}
In \eqref{eqn converse 56}, $h(Y_{1\sim l_2})$ is bounded by
\begin{subequations}\label{eqn converse 6}
\begin{align}
  h(Y_{1\sim l_2})=&I(Y_{1\sim l_2};X_{1\sim N_T})+h(Y_{1\sim l_2}|X_{1\sim N_T})\label{eqn converse 61}\\
  =&I(Y_{1\sim l_2};X_{1\sim N_T})+T_F\varepsilon_P\log P\label{eqn converse 62}\\
  \le&T_A\min\{N_T,l_2\}(\log P+\varepsilon_P\log P)\notag\\
  &+T_A\varepsilon_P\log P.\label{eqn converse 63}
\end{align}
\end{subequations}
Here, $X_{1\sim N_T}$ are the transmitted signals from all the $N_T$ ENs; \eqref{eqn converse 62} is due to the fact that the conditional entropy $h(Y_{1\sim l_2}|X_{1\sim N_T})$ comes from the noise received at UEs; \eqref{eqn converse 63} follows from the capacity bound of the $N_T\times l_2$ MIMO channel in high SNR regime, similar to the proof of \cite[Lemma 5]{niesen}.

In \eqref{eqn converse 56}, $H(V_{1\sim l_2}|W_{l_2+1\sim N})$ is given by
\begin{subequations}\label{eqn converse 7}
\begin{align}
  H(V_{1\sim l_2}|W_{l_2+1\sim N})=&H(V_{1\sim l_2}^{1\sim l_2})\label{eqn converse 71}\\
  =&\sum_{n=1}^{l_2} H(V_{1\sim l_2}^n)\label{eqn converse 72}\\
  =&l_2F\cdot [1-(1-\mu_R)^{l_2}].\label{eqn converse 73}
\end{align}
\end{subequations}
Note that \eqref{eqn converse 7} is similar to \eqref{eqn converse 3}, and the detailed explanation is omitted here.

Combining \eqref{eqn converse 56}\eqref{eqn converse 63}\eqref{eqn converse 73}, and letting $F\rightarrow\infty$, $P\rightarrow\infty$, we obtain that
\begin{align}
  \lim_{P\rightarrow\infty}\lim_{F\rightarrow\infty}\frac{T_A\log P}{F}\ge \frac{l_2(1-\mu_R)^{l_2}}{\min\{l_2,N_T\}}.\label{eqn converse 8}
\end{align}

Combining \eqref{eqn converse 4} and \eqref{eqn converse 8}, and taking the maximum over $l_1,l_2\in[N_R]$, the minimum NDT $\tau$ is lower bounded by
\begin{align}
  \tau&=\lim_{P\rightarrow\infty}\lim_{F\rightarrow\infty}\frac{(T_F+T_A)\log P}{F}\notag\\
  &\ge\max_{l_1\in[N_R]}\frac{l_1}{r}(1-\mu_T)^{N_T}(1-\mu_R)^{l_1}\!+\!\max_{l_2\in[N_R]}\frac{l_2(1-\mu_R)^{l_2}}{\min\{l_2,N_T\}},\notag
\end{align}
which finishes the proof of Theorem \ref{thm 2}.
\end{proof}

Comparing Theorem \ref{thm 1} and Theorem \ref{thm 2}, the multiplicative gap between the upper and lower bounds is given in the following corollary, whose proof is in appendix.

\begin{corollary}[Gap of NDT]\label{coro gap}
The multiplicative gap between the upper and lower bounds of the minimum NDT of the considered system is within 12.
\end{corollary}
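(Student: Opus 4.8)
The plan is to bound the ratio $\tau_{upper}/\tau_{lower}$ by splitting both quantities into a fronthaul part and an access part, and to show each part's ratio is controlled by a constant; since $\tau=\tau_F+\tau_A$ in both bounds, a bound of the form $\tau_{upper,F}\le c_1\tau_{lower,F}$ and $\tau_{upper,A}\le c_2\tau_{lower,A}$ yields $\tau_{upper}\le\max\{c_1,c_2\}\,\tau_{lower}$. First I would write out the fronthaul part of the upper bound, $\sum_{m,n}\binom{N_R}{m+1}\binom{N_T}{n}\min\{1,\tfrac{i^*}{n+1}\}f_{m,n}/r$, and the access part, $\sum_{m,n}\binom{N_R-1}{m}\binom{N_T}{n}f_{m,n}/d_{m,n+i^*}$, where $i^*$ is the minimizing choice (so in particular I may upper-bound by picking any \emph{convenient} $i$, e.g.\ $i=0$ for the access part to get $d_{m,n}$, or $i$ large for full cooperation). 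The key algebraic tool is that the binomial-weighted sums of $f_{m,n}$ telescope via the binomial theorem: $\sum_{n=0}^{N_T}\binom{N_T}{n}\mu_T^n(1-\mu_T)^{N_T-n}=1$ and $\sum_{m}\binom{N_R-1}{m}\mu_R^m(1-\mu_R)^{N_R-1-m}=1$, so that sums like $\sum_{m,n}\binom{N_R-1}{m}\binom{N_T}{n}f_{m,n}$ collapse to $(1-\mu_R)\sum_m\binom{N_R-1}{m}\mu_R^m(1-\mu_R)^{N_R-1-m}=(1-\mu_R)$, and similar clean closed forms appear throughout.

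For the \emph{access} part I would compare $\sum_{m,n}\binom{N_R-1}{m}\binom{N_T}{n}f_{m,n}/d_{m,n+i^*}$ against $\max_{l_2}l_2(1-\mu_R)^{l_2}/\min\{l_2,N_T\}$. The per-user DoF $d_{m,j}$ of the cooperative X-multicast channel from \cite[Lemma 1]{mine} must be lower-bounded by something like $\Theta(1/\text{(small constant)})$ uniformly (it is at least some explicit fraction, e.g.\ bounded below by a constant times $\min\{\ldots\}$), which lets me replace $1/d_{m,n+i^*}$ by a constant and then use the binomial collapse. The lower bound in turn is at least $(1-\mu_R)^2$ (take $l_2=2$, or $l_2=1$) up to the $\min\{l_2,N_T\}$ factor, and one checks $\max_{l_2}l_2(1-\mu_R)^{l_2}$ is bounded below by a universal constant times the relevant quantity via calculus on $x(1-\mu_R)^x$. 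Matching these gives the access constant. For the \emph{fronthaul} part I would similarly collapse $\sum_{m,n}\binom{N_R}{m+1}\binom{N_T}{n}\min\{1,\tfrac{i^*}{n+1}\}f_{m,n}$: using $\min\{1,\tfrac{i^*}{n+1}\}\le1$ and choosing $i^*$ appropriately, bound the $n$-sum by the binomial identity and the $\sum_m\binom{N_R}{m+1}\mu_R^m(1-\mu_R)^{N_R-m}$ sum by $\le 1/\mu_R$ times a collapse, or more carefully by $(1-\mu_R)$-type terms; then compare with $\max_{l_1}l_1(1-\mu_T)^{N_T}(1-\mu_R)^{l_1}/r$. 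The factor $(1-\mu_T)^{N_T}$ in the lower bound is exactly $f_{m,n}$ at $n=0$ summed appropriately, which is why the $n=0$ term of $\tau_{upper}$ drives this comparison; the $n\ge1$ terms of the fronthaul upper bound need to be absorbed too, using that $\min\{1,\tfrac{i}{n+1}\}$ is small when cooperation is cheap.

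The main obstacle I expect is the \emph{case analysis over the regimes of $\mu_R$ and $\mu_T$}: when $\mu_R$ (or $\mu_T$) is close to $1$, individual terms $f_{m,n}$ are tiny and the maximizing $l_1,l_2$ in the lower bound are small, so naive term-by-term bounds blow up and one must instead argue that the dominant terms of $\tau_{upper}$ are also tiny and matched by the corresponding small-$l$ terms of $\tau_{lower}$; when $\mu_R,\mu_T$ are small, almost all weight is on $f_{0,0}=(1-\mu_R)^{N_R}(1-\mu_T)^{N_T}$ and the comparison is cleaner. A second subtlety is that $i^*$ is defined by a minimization, so for the upper bound I am free to plug in a specific $i$ (which only weakens the bound on $\tau_{upper}$), but I must do so in a way that the resulting DoF $d_{m,n+i}$ and the resulting $\min\{1,\tfrac{i}{n+1}\}$ are simultaneously favorable — essentially a two-regime choice of $i$ depending on whether fronthaul or access dominates. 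Carefully organizing these regimes, invoking the explicit bound on $d_{m,j}$ from \cite[Lemma 1]{mine}, and then doing the (routine but lengthy) binomial-sum estimates is where the constant $12$ gets pinned down; I would present the details in the appendix as the corollary statement promises.
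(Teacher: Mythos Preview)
Your proposal follows essentially the same architecture as the paper: split $\tau_{upper}$ and $\tau_{lower}$ into fronthaul and access parts, bound each ratio separately using binomial collapses and a lower bound on $d_{m,j}$, then do case analysis to pin down the constant. Two simplifications in the paper remove most of the difficulty you anticipate. First, the paper always takes $i=0$ in \eqref{eqn tau mn gamma}; this makes $\min\{1,\tfrac{i}{n+1}\}=0$ and kills the fronthaul contribution of every group with $n\ge1$ outright, so only the $n=0$ groups (which carry the factor $(1-\mu_T)^{N_T}$) contribute to the fronthaul part of $\tau_{upper}$, and the match with the $(1-\mu_T)^{N_T}$ in $\tau_{lower}$ is then exact --- no ``two-regime choice of $i$'' is needed, and there is nothing to ``absorb'' from $n\ge1$. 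Second, as a consequence $(1-\mu_T)^{N_T}$ cancels from the fronthaul ratio and $\mu_T$ never appears in the access ratio, so there is no case analysis on $\mu_T$ at all. The actual case split is (i) $N_T\ge N_R$ versus $N_T<N_R$, which determines the DoF lower bound used ($d_{m,n}\ge 1/2$ in the first case, $d_{m,n}\ge d_{m,1}=N_T/(N_T+\tfrac{N_R-m-1}{m+1})$ in the second), and (ii) a handful of subranges of $\mu_R$ relative to $1/N_R$ and $1/N_T$, handled by plugging in specific integers $l_1,l_2$ such as $\lfloor 1/(4\mu_R)\rfloor$ or $\lfloor N_R/4\rfloor$ rather than by continuous optimization of $x(1-\mu_R)^x$.
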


\section{Numerical Results and Discussions}\label{section discussion}

\begin{figure}[tbp]
\begin{centering}
\includegraphics[scale=0.34]{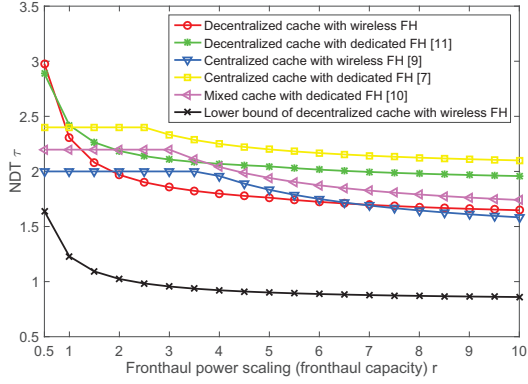}
\caption{NDT when $N_T=2,N_R=5,\mu_T=1/2,\mu_R=1/5$.}\label{Fig compare1}
\end{centering}
\end{figure}


In this section, we use numerical examples to compare the achievable NDT of our proposed scheme using decentralized caching with existing schemes, including  centralized caching in \cite{simeone}\footnote{Since \cite{simeone} only considers EN caches, we add conventional uncoded caching at UEs in the plot of \cite{simeone} for fair comparison.},\cite{Cran}, mixed caching (centralized at ENs and decentralized at UEs) in \cite{Joan2018storage}, and decentralized caching (for two ENs only) in \cite{girgis2017decentralized}. Note that \cite{Cran} assumed wireless fronthaul, while \cite{girgis2017decentralized,simeone,Joan2018storage} assumed dedicated fronthaul with capacity of $C_F=r\log P$ bits per symbol. Fig.~\ref{Fig compare1} depicts the NDT when $N_T\!=\!2,N_R\!=\!5,\mu_T\!=\frac{1}{2},\mu_R\!=\frac{1}{5}$. It is seen that when $r$ increases, the achievable NDT, as well as our proposed lower bound, decreases and finally approaches a constant as expected.


Comparing to the decentralized caching in \cite{girgis2017decentralized}, it is seen in Fig. \ref{Fig compare1} that our achievable NDT is better in most fronthaul capacity regions even though dedicated fronthaul link is considered in \cite{girgis2017decentralized}. This is because EN cooperation in the access link is fully exploited by the careful design of fronthaul transmission in our scheme, while it is only exploited in \cite{girgis2017decentralized} when transmitting some specific subfiles.

When $r$ is small, it is seen that our scheme with decentralized caching is inferior to the centralized caching in \cite{Cran} as expected, since \cite{Cran} can better utilize the cache memory. However, when $r\!\ge\! 2$, our scheme performs very close to \cite{Cran}, and even outperforms it when $2\le r\le6$. This is because 1) an additional layer of coded multicasting opportunities based on the coded messages desired by UEs is exploited in the fronthaul link in our scheme as stated in Section \ref{section delivery33 m1}, while in \cite{Cran}, the coded multicasting opportunities in the fronthaul link are only exploited by generating coded messages directly from requested subfiles $\{W_{q,\Phi,\Psi}\}$; 2) we obtain a larger achievable per-user DoF than the one in \cite{Cran} in the access channel by using interference neutralization and interference alignment jointly.

Comparing to the centralized caching in \cite{simeone} and mixed caching in \cite{Joan2018storage} with dedicated fronthaul, it is seen that our achievable NDT is even better than theirs when $r\ge2$. This is because in their schemes, the fronthaul link is not used to deliver contents already cached at ENs to boost EN cooperation in the access link. Note that the authors in \cite{Koh2017Cloud} state that coded multicasting is not useful in certain cases in the fronthaul delivery. However, by comparing to \cite{simeone} with conventional uncoded caching at UEs, it is still seen that coded multicasting plays an important role in the access delivery in our scheme to reduce the sum NDT.

\section{Conclusion}
In this paper, we investigated the latency performance of a Fog-RAN with wireless fronthaul and for arbitrary number of cache-equipped ENs and cache-equipped UEs, by using NDT as the performance metric. The system consists of two phases: a file-splitting based decentralized cache placement phase and a fronthaul-aided two-hop content delivery phase. In our proposed delivery scheme, coded multicasting opportunities are fully exploited in the fronthaul link by fetching both non-cached and cached contents of ENs to enhance EN cooperation in the access transmission. Then, the access link is changed into the cooperative X-multicast channels. We obtained both the achievable upper bound and theoretical lower bound of NDT for decentralized caching, with a multiplicative gap less than 12. It is shown that our decentralized caching scheme can balance the time between the fronthaul link and access link by the careful design of fronthaul transmission, and even outperforms the centralized schemes, mixed scheme, and decentralized scheme (with dedicated fronthaul) under certain conditions.

\setcounter{subsection}{0}
\newpage
\section*{Appendix: Proof of Corollary \ref{coro gap}}
In Appendix, we aim to prove Corollary \ref{coro gap}. We consider two cases to prove the gap, i.e., $N_T\ge N_R$ and $N_T<N_R$.

\subsection{$N_T\ge N_R$}
We first consider the case when $N_T\ge N_R$. The achievable upper bound of NDT is $\tau_{upper}=\sum_{m=0}^{N_R-1}\sum_{n=0}^{N_T}\tau_{m,n}$, where $\tau_{m,n}$ is given in \eqref{eqn tau m0} and \eqref{eqn tau mn gamma}. Taking $i=0$ in \eqref{eqn tau mn gamma}, $\tau_{m,n}$ ($n>0$) is bounded by
\begin{align}
  \tau_{m,n}\le\frac{\binom{N_R-1}{m}\binom{N_T}{n}f_{m,n}}{d_{m,n}}.\notag
\end{align}
We also have
\begin{align}
  \tau_{m,0}\!=\!\binom{N_R}{m+1}\frac{f_{m,0}}{r}\!+\!\frac{\binom{N_R-1}{m}f_{m,0}}{d_{m,N_T}}.\notag
\end{align}
When $N_T\ge N_R$, it is easy to see that $d_{m,n}\ge1/2$ for $m\in[N_R-1]\cup\{0\},n\in[N_T]$. Then, $\tau_{upper}$ is upper bounded by \eqref{eqn gap 1}.
\begin{figure*}[t]
\begin{align}
\tau_{upper}\le&\sum_{m=0}^{N_R-1}\sum_{n=1}^{N_T}\frac{\binom{N_R-1}{m}\binom{N_T}{n}f_{m,n}}{d_{m,n}}+\sum_{m=0}^{N_R-1}\frac{\binom{N_R-1}{m}f_{m,0}}{d_{m,N_T}}+\frac{1}{r}\sum_{m=0}^{N_R-1}\binom{N_R}{m+1}f_{m,0}\notag\\
\le&2\sum_{m=0}^{N_R-1}\sum_{n=0}^{N_T}\binom{N_R-1}{m}\binom{N_T}{n}f_{m,n}+\frac{(1-\mu_T)^{N_T}}{r}\sum_{m=0}^{N_R-1}\binom{N_R}{m+1}\mu_R^m(1-\mu_R)^{N_R-m}\notag\\
=&2(1-\mu_R)+\frac{(1-\mu_T)^{N_T}}{r}\frac{1-\mu_R}{\mu_R}\sum_{m=0}^{N_R-1}\binom{N_R}{m+1}\mu_R^{m+1}(1-\mu_R)^{N_R-m-1}\notag\\
=&2(1-\mu_R)+\frac{(1-\mu_T)^{N_T}}{r}\frac{1-\mu_R}{\mu_R}\sum_{p=1}^{N_R}\binom{N_R}{p}\mu_R^{p}(1-\mu_R)^{N_R-p}\notag\\
=&2(1-\mu_R)+\frac{(1-\mu_T)^{N_T}}{r}\frac{1-\mu_R}{\mu_R}\left[\sum_{p=0}^{N_R}\binom{N_R}{p}\mu_R^{p}(1-\mu_R)^{N_R-p}-(1-\mu_R)^{N_R}\right]\notag\\
=&2(1-\mu_R)+\frac{(1-\mu_T)^{N_T}}{r}\frac{1-\mu_R}{\mu_R}\left[1-(1-\mu_R)^{N_R}\right]\label{eqn gap 1}
\end{align}
\hrule
\end{figure*}
Taking $l_2=1$ in \eqref{eqn thm 2}, the lower bound of NDT is lower bounded by
\begin{align}
  \tau_{lower}\ge\max_{l_1}\frac{l_1(1-\mu_T)^{N_T}(1-\mu_R)^{l_1}}{r}+(1-\mu_R).\label{eqn gap 2}
\end{align}
Denote $g$ as the multiplicative gap, then the gap is bounded by
\begin{align}
  g\le\frac{2(1-\mu_R)+\frac{(1-\mu_T)^{N_T}}{r}\frac{1-\mu_R}{\mu_R}\left[1-(1-\mu_R)^{N_R}\right]}{\max_{l_1}\frac{l_1(1-\mu_T)^{N_T}(1-\mu_R)^{l_1}}{r}+(1-\mu_R)}.\notag
\end{align}
To upper bound $g$, we first consider
\begin{align}
  g_F\triangleq&\frac{\frac{(1-\mu_T)^{N_T}}{r}\frac{1-\mu_R}{\mu_R}\left[1-(1-\mu_R)^{N_R}\right]}{\max_{l_1}\frac{l_1(1-\mu_T)^{N_T}(1-\mu_R)^{l_1}}{r}}\notag\\
  =&\frac{\frac{1-\mu_R}{\mu_R}\left[1-(1-\mu_R)^{N_R}\right]}{\max_{l_1}l_1(1-\mu_R)^{l_1}},\notag
\end{align}
which can also be viewed as the gap in the fronthaul link. We consider four cases to upper bound $g_F$, i.e. (1) $N_R\le 12$; (2) $N_R\ge 13, \mu_R\ge\frac{1}{12}$; (3) $N_R\ge 13, \frac{1}{N_R}\le\mu_R<\frac{1}{12}$; (4) $N_R\ge13, \mu_R<\frac{1}{N_R}$. Note that the broadcast channel in the fronthaul link is similar to the one-server shared link in \cite{fundamentallimits,decentralized}, and the proof here is similar to the one in  \cite{fundamentallimits,decentralized}.
\subsubsection{$N_R\le 12$}
In this case, using the inequality $(1-\mu_R)^{N_R}\ge1-N_R\mu_R$, we have
\begin{align}
\frac{1-\mu_R}{\mu_R}\left[1-(1-\mu_R)^{N_R}\right]\le\frac{1-\mu_R}{\mu_R}N_R\mu_R\le12(1-\mu_R).\notag
\end{align}
Letting $l_1=1$, $g_F$ is bounded by
\begin{align}
  g_F\le\frac{12(1-\mu_R)}{1-\mu_R}=12.\notag
\end{align}
\subsubsection{$N_R\ge 13, \mu_R\ge\frac{1}{12}$}
We have
\begin{align}
\frac{1-\mu_R}{\mu_R}\left[1-(1-\mu_R)^{N_R}\right]\le\frac{1-\mu_R}{\mu_R}\le12(1-\mu_R).\notag
\end{align}
Similar to Case 1 that $N_R\le12$, $g_F$ is also upper bounded by 12.
\subsubsection{$N_R\ge 13, \frac{1}{N_R}\le\mu_R<\frac{1}{12}$}
Letting $l_1=\lfloor\frac{1}{4\mu_R}\rfloor$, we have
\begin{align}
  \max_{l_1}l_1(1-\mu_R)^{l_1}&\ge\lfloor\frac{1}{4\mu_R}\rfloor(1-\mu_R)^{\lfloor\frac{1}{4\mu_R}\rfloor}\notag\\
  &\ge \lfloor\frac{1}{4\mu_R}\rfloor(1-\lfloor\frac{1}{4\mu_R}\rfloor\mu_R)\notag\\
  &\ge (\frac{1}{4\mu_R}-1)(1-\frac{1}{4\mu_R}\mu_R)\notag\\
  &=\frac{3}{16\mu_R}-\frac{3}{4}.\notag
\end{align}
Then, $g_F$ is upper bounded by
\begin{align}
  g_F\le&\frac{\frac{1-\mu_R}{\mu_R}\left[1-(1-\mu_R)^{N_R}\right]}{\frac{3}{16\mu_R}-\frac{3}{4}}\notag\\
  \le&\frac{1/\mu_R}{\frac{3}{16\mu_R}-\frac{3}{4}}\notag\\
  =&\frac{1}{3/16-3\mu_R/4}\notag\\
  <&\frac{1}{3/16-3/48}=8.\notag
\end{align}
\subsubsection{$N_R\ge13, \mu_R<\frac{1}{N_R}$}
Letting $l_1=\lfloor\frac{N_R}{4}\rfloor$, we have
\begin{align}
g_F&\le\frac{\frac{1-\mu_R}{\mu_R}\left[1-(1-\mu_R)^{N_R}\right]}{\lfloor\frac{N_R}{4}\rfloor(1-\mu_R)^{\lfloor\frac{N_R}{4}\rfloor}}\notag\\
&=\frac{1-(1-\mu_R)^{N_R}}{\mu_R\lfloor\frac{N_R}{4}\rfloor(1-\mu_R)^{\lfloor\frac{N_R}{4}\rfloor-1}}\notag\\
&\le\frac{1-(1-N_R\mu_R)}{\mu_R\lfloor\frac{N_R}{4}\rfloor(1-\mu_R)^{\lfloor\frac{N_R}{4}\rfloor-1}}\notag\\
&=\frac{N_R}{\lfloor\frac{N_R}{4}\rfloor}\frac{1}{(1-\mu_R)^{\lfloor\frac{N_R}{4}\rfloor-1}}\notag\\
&\le\frac{N_R}{\lfloor\frac{N_R}{4}\rfloor}\frac{1}{1-(\lfloor\frac{N_R}{4}\rfloor-1)\mu_R}\notag\\
&\le\frac{N_R}{\frac{N_R}{4}-1}\frac{1}{1-(\frac{N_R}{4}-1)\mu_R}\notag\\
&<\frac{1}{\frac{1}{4}-\frac{1}{N_R}}\frac{1}{1-(\frac{N_R}{4}-1)\frac{1}{N_R}}\notag\\
&\le\frac{1}{\frac{1}{4}-\frac{1}{13}}\frac{1}{\frac{3}{4}+\frac{1}{N_R}}<8.\notag
\end{align}
Combining all four cases, we find that $g_F\le12$ for all $\mu_R,N_R$. Then, the gap $g$ is upper bounded by
\begin{align}
  g\le\frac{2(1-\mu_R)+12\max_{l_1}\frac{l_1(1-\mu_T)^{N_T}(1-\mu_R)^{l_1}}{r}}{\max_{l_1}\frac{l_1(1-\mu_T)^{N_T}(1-\mu_R)^{l_1}}{r}+(1-\mu_R)}\le12.\notag
\end{align}
Thus, we proved the case when $N_T\ge N_R$.

\subsection{$N_T<N_R$}
Now, we consider the case when $N_T<N_R$. The achievable upper bound of NDT is $\tau_{upper}=\sum_{m=0}^{N_R-1}\sum_{n=0}^{N_T}\tau_{m,n}$, where $\tau_{m,n}$ is given in \eqref{eqn tau m0} and \eqref{eqn tau mn gamma}. Taking $i=0$ in \eqref{eqn tau mn gamma}, $\tau_{m,n}$ ($n>0$) is bounded by
\begin{align}
  \tau_{m,n}\le\frac{\binom{N_R-1}{m}\binom{N_T}{n}f_{m,n}}{d_{m,n}}.\notag
\end{align}
We also have
\begin{align}
  \tau_{m,0}\!=\!\binom{N_R}{m+1}\frac{f_{m,0}}{r}\!+\!\frac{\binom{N_R-1}{m}f_{m,0}}{d_{m,N_T}}.\notag
\end{align}
It is easy to see in \cite[Lemma 1]{mine} that $d_{m,n}\ge d_{m,1}=\frac{N_T}{N_T+\frac{N_R-m-1}{m+1}}$ for $m\in[N_R-1]\cup\{0\},n\in[N_T]$. Then, the achievable upper bound of NDT is bounded by \eqref{eqn gap 3}.
\begin{figure*}[t]
\begin{align}
  \tau_{upper}\le&\sum_{m=0}^{N_R-1}\sum_{n=1}^{N_T}\frac{\binom{N_R-1}{m}\binom{N_T}{n}f_{m,n}}{d_{m,n}}+\sum_{m=0}^{N_R-1}\frac{\binom{N_R-1}{m}f_{m,0}}{d_{m,N_T}}+\frac{1}{r}\sum_{m=0}^{N_R-1}\binom{N_R}{m+1}f_{m,0}\notag\\
  \le&\sum_{m=0}^{N_R-1}\sum_{n=1}^{N_T}\frac{\binom{N_R-1}{m}\binom{N_T}{n}f_{m,n}}{\frac{N_T}{N_T+\frac{N_R-m-1}{m+1}}}+\sum_{m=0}^{N_R-1}\frac{\binom{N_R-1}{m}f_{m,0}}{\frac{N_T}{N_T+\frac{N_R-m-1}{m+1}}}+\frac{1}{r}\sum_{m=0}^{N_R-1}\binom{N_R}{m+1}f_{m,0}\notag\\
  =&\sum_{m=0}^{N_R-1}\frac{\binom{N_R-1}{m}}{\frac{N_T}{N_T+\frac{N_R-m-1}{m+1}}}\sum_{n=0}^{N_T}\binom{N_T}{n}f_{m,n}+\frac{1}{r}\sum_{m=0}^{N_R-1}\binom{N_R}{m+1}f_{m,0}\notag\\
  =&\sum_{m=0}^{N_R-1}\frac{\binom{N_R-1}{m}}{\frac{N_T}{N_T+\frac{N_R-m-1}{m+1}}}\mu_R^m(1-\mu_R)^{N_R-m}+\frac{1}{r}\sum_{m=0}^{N_R-1}\binom{N_R}{m+1}f_{m,0}\notag\\
  =&\frac{N_T-1}{N_T}\sum_{m=0}^{N_R-1}\binom{N_R-1}{m}\mu_R^m(1-\mu_R)^{N_R-m}+\frac{1}{N_T}\sum_{m=0}^{N_R-1}\binom{N_R}{m+1}\mu_R^m(1-\mu_R)^{N_R-m}+\frac{1}{r}\sum_{m=0}^{N_R-1}\binom{N_R}{m+1}f_{m,0}\notag\\
  =&\frac{N_T-1}{N_T}(1-\mu_R)\sum_{m=0}^{N_R-1}\binom{N_R-1}{m}\mu_R^m(1-\mu_R)^{N_R-m-1}+\frac{1-\mu_R}{N_T\mu_R}\sum_{m=0}^{N_R-1}\binom{N_R}{m+1}\mu_R^{m+1}(1-\mu_R)^{N_R-m-1}\notag\\
  &+\frac{1}{r}\sum_{m=0}^{N_R-1}\binom{N_R}{m+1}f_{m,0}\notag\\
  =&\frac{N_T-1}{N_T}(1-\mu_R)+\frac{1-\mu_R}{N_T\mu_R}\sum_{p=1}^{N_R}\binom{N_R}{p}\mu_R^{p}(1-\mu_R)^{N_R-p}+\frac{1}{r}\sum_{m=0}^{N_R-1}\binom{N_R}{m+1}f_{m,0}\notag\\
  =&\frac{N_T-1}{N_T}(1-\mu_R)+\frac{1-\mu_R}{N_T\mu_R}\left[\sum_{p=0}^{N_R}\binom{N_R}{p}\mu_R^{p}(1-\mu_R)^{N_R-p}-(1-\mu_R)^{N_R}\right]+\frac{1}{r}\sum_{m=0}^{N_R-1}\binom{N_R}{m+1}f_{m,0}\notag\\
  =&\frac{N_T-1}{N_T}(1-\mu_R)+\frac{1-\mu_R}{N_T\mu_R}\left[1-(1-\mu_R)^{N_R}\right]+\frac{(1-\mu_T)^{N_T}}{r}\frac{1-\mu_R}{\mu_R}\left[1-(1-\mu_R)^{N_R}\right]\label{eqn gap 3}
\end{align}
\hrule
\end{figure*}
Using Theorem \ref{thm 2}, the multiplicative gap $g$ is bounded by \eqref{eqn gap 4}.
\begin{figure*}[t]
\begin{align}
  g\le\frac{\frac{N_T-1}{N_T}(1-\mu_R)+\frac{1-\mu_R}{N_T\mu_R}\left[1-(1-\mu_R)^{N_R}\right]+\frac{(1-\mu_T)^{N_T}}{r}\frac{1-\mu_R}{\mu_R}\left[1-(1-\mu_R)^{N_R}\right]}{\max_{l_1\in[N_R]}\frac{l_1(1-\mu_T)^{N_T}(1-\mu_R)^{l_1}}{r}+\max_{l_2\in[N_R]}\frac{l_2(1-\mu_R)^{l_2}}{\min\{l_2,N_T\}}}.\label{eqn gap 4}
\end{align}
\hrule
\end{figure*}
In \eqref{eqn gap 4}, from the analysis when $N_T\ge N_R$, we have
\begin{align}
  \frac{\frac{(1-\mu_T)^{N_T}}{r}\frac{1-\mu_R}{\mu_R}\left[1-(1-\mu_R)^{N_R}\right]}{\max_{l_1\in[N_R]}\frac{l_1(1-\mu_T)^{N_T}(1-\mu_R)^{l_1}}{r}}\le12.\notag
\end{align}
Then, to bound $g$ in \eqref{eqn gap 4}, we first consider
\begin{align}
  g_A\triangleq\frac{\frac{N_T-1}{N_T}(1-\mu_R)+\frac{1-\mu_R}{N_T\mu_R}\left[1-(1-\mu_R)^{N_R}\right]}{\max_{l_2\in[N_R]}\frac{l_2(1-\mu_R)^{l_2}}{\min\{l_2,N_T\}}},\notag
\end{align}
which can also be viewed as the multiplicative gap in the access link. We use three cases to upper bound $g_A$, i.e., (1) $\mu_R<\frac{1}{4N_R}$; (2) $\frac{1}{4N_R}\le\mu_R<\frac{1}{4N_T}$; (3) $\mu_R\ge\frac{1}{4N_T}$.
\subsubsection{$\mu_R<\frac{1}{4N_R}$}
Letting $l_2=N_R$, we have
\begin{align}
\max_{l_2\in[N_R]}\frac{l_2(1-\mu_R)^{l_2}}{\min\{l_2,N_T\}}\ge&\frac{N_R(1-\mu_R)^{N_R}}{N_T}\notag\\
  \ge&\frac{N_R(1-N_R\mu_R)}{N_T}\notag\\
  >&\frac{N_R}{N_T}(1-N_R\frac{1}{4N_R})=\frac{3N_R}{4N_T}.\label{eqn gap 5}
\end{align}
Letting $l_2=1$, we have
\begin{align}
  \max_{l_2\in[N_R]}\frac{l_2(1-\mu_R)^{l_2}}{\min\{l_2,N_T\}}\ge1-\mu_R.\label{eqn gap 6}
\end{align}
We also have
\begin{align}
  \frac{1-\mu_R}{N_T\mu_R}\left[1-(1-\mu_R)^{N_R}\right]\le&\frac{1-\mu_R}{N_T\mu_R}\left[1-(1-N_R\mu_R)\right]\notag\\
  =&\frac{N_R(1-\mu_R)}{N_T}\notag\\
  \le&\frac{N_R}{N_T}.\label{eqn gap 7}
\end{align}
Combining \eqref{eqn gap 5}\eqref{eqn gap 6}\eqref{eqn gap 7}, $g_A$ is upper bounded by
\begin{align}
  g_A=&\frac{\frac{N_T-1}{N_T}(1-\mu_R)+\frac{1-\mu_R}{N_T\mu_R}\left[1-(1-\mu_R)^{N_R}\right]}{\max_{l_2} \frac{l_2(1-\mu_R)^{l_2}}{\min\{l_2,N_T\}}}\notag\\
  \le&\frac{\frac{N_T-1}{N_T}(1-\mu_R)}{1-\mu_R}+\frac{\frac{N_R}{N_T}}{\frac{3N_R}{4N_T}}\notag\\
  <&1+4/3=7/3.\notag
\end{align}

\subsubsection{$\frac{1}{4N_R}\le\mu_R<\frac{1}{4N_T}$}
Letting $l_2=\lceil\frac{1}{4\mu_R}\rceil$, we have
\begin{align}
&\max_{l_2} \frac{l_2(1-\mu_R)^{l_2}}{\min\{l_2,N_T\}}\notag\\
  \ge&\frac{\lceil\frac{1}{4\mu_R}\rceil(1-\mu_R)^{\lceil\frac{1}{4\mu_R}\rceil}}{\min\{\lceil\frac{1}{4\mu_R}\rceil,N_T\}}\notag\\
  \ge&\frac{\frac{1}{4\mu_R}(1-\lceil\frac{1}{4\mu_R}\rceil\mu_R)}{N_T}\notag\\
  \ge&\frac{1-(\frac{1}{4\mu_R}+1)\mu_R}{4\mu_RN_T}\notag\\
  =&\frac{\frac{3}{4}-\mu_R}{4N_T\mu_R}\notag\\
  >&\frac{\frac{3}{4}-\frac{1}{8}}{4N_T\mu_R}=\frac{5}{32N_T\mu_R}.\label{eqn gap 8}
\end{align}
We also have
\begin{align}
  \frac{1-\mu_R}{N_T\mu_R}\left[1-(1-\mu_R)^{N_R}\right]\le\frac{1}{N_T\mu_R}.\label{eqn gap 9}
\end{align}
Combining \eqref{eqn gap 6}\eqref{eqn gap 8}\eqref{eqn gap 9}, $g_A$ is upper bounded by
\begin{align}
  g_A=&\frac{\frac{N_T-1}{N_T}(1-\mu_R)+\frac{1-\mu_R}{N_T\mu_R}\left[1-(1-\mu_R)^{N_R}\right]}{\max_{l_2} \frac{l_2(1-\mu_R)^{l_2}}{\min\{l_2,N_T\}}}\notag\\
  \le&\frac{\frac{N_T-1}{N_T}(1-\mu_R)}{1-\mu_R}+\frac{\frac{1}{N_T\mu_R}}{\frac{5}{32N_T\mu_R}}\notag\\
  <&1+32/5=37/5.\notag
\end{align}
\subsubsection{$\mu_R\ge\frac{1}{4N_T}$}
Letting $l_2=\lfloor\frac{1}{4\mu_R}\rfloor$, we have
\begin{align}
  &\max_{l_2} \frac{l_2(1-\mu_R)^{l_2}}{\min\{l_2,N_T\}}\notag\\
  \ge&\frac{\lfloor\frac{1}{4\mu_R}\rfloor(1-\mu_R)^{\lfloor\frac{1}{4\mu_R}\rfloor}}{\min\{\lfloor\frac{1}{4\mu_R}\rfloor,N_T\}}\notag\\
  =&(1-\mu_R)^{\lfloor\frac{1}{4\mu_R}\rfloor}\notag\\
  \ge&1-\lfloor\frac{1}{4\mu_R}\rfloor\mu_R\notag\\
  \ge&1-\frac{1}{4\mu_R}\mu_R=\frac{3}{4}.\label{eqn gap 10}
\end{align}
We also have
\begin{align}
  \frac{1-\mu_R}{N_T\mu_R}\left[1-(1-\mu_R)^{N_R}\right]\le\frac{1}{N_T\mu_R}\le\frac{1}{N_T\frac{1}{4N_T}}=4.\label{eqn gap 11}
\end{align}
Combining \eqref{eqn gap 6}\eqref{eqn gap 10}\eqref{eqn gap 11}, $g_A$ is bounded by
\begin{align}
   g_A=&\frac{\frac{N_T-1}{N_T}(1-\mu_R)+\frac{1-\mu_R}{N_T\mu_R}\left[1-(1-\mu_R)^{N_R}\right]}{\max_{l_2} \frac{l_2(1-\mu_R)^{l_2}}{\min\{l_2,N_T\}}}\notag\\
   \le&\frac{\frac{N_T-1}{N_T}(1-\mu_R)}{1-\mu_R}+\frac{4}{3/4}\notag\\
   <&1+16/3=19/3.\notag
\end{align}

From the above three cases, we find that $g_A<12$. Then the multiplicative gap $g$ is bounded by \eqref{eqn gap 12}, when $N_T<N_R$.
\begin{figure*}[t]
\begin{align}
  g\le&\frac{\frac{N_T-1}{N_T}(1-\mu_R)+\frac{1-\mu_R}{N_T\mu_R}\left[1-(1-\mu_R)^{N_R}\right]+\frac{(1-\mu_T)^{N_T}}{r}\frac{1-\mu_R}{\mu_R}\left[1-(1-\mu_R)^{N_R}\right]}{\max_{l_1\in[N_R]}\frac{l_1(1-\mu_T)^{N_T}(1-\mu_R)^{l_1}}{r}+\max_{l_2\in[N_R]}\frac{l_2(1-\mu_R)^{l_2}}{\min\{l_2,N_T\}}}\notag\\
  <&\frac{12\max_{l_2\in[N_R]}\frac{l_2(1-\mu_R)^{l_2}}{\min\{l_2,N_T\}}+12\max_{l_1\in[N_R]}\frac{l_1(1-\mu_T)^{N_T}(1-\mu_R)^{l_1}}{r}}{\max_{l_1\in[N_R]}\frac{l_1(1-\mu_T)^{N_T}(1-\mu_R)^{l_1}}{r}+\max_{l_2\in[N_R]}\frac{l_2(1-\mu_R)^{l_2}}{\min\{l_2,N_T\}}}=12\label{eqn gap 12}
\end{align}
\hrule
\end{figure*}

Thus we finished the proof of Corollary \ref{coro gap} that the multiplicative gap is within 12.

\bibliographystyle{IEEEtran}
\bibliography{IEEEabrv,journal}
\end{document}